\documentclass[12pt,prb]{revtex4}

\usepackage{amsfonts}
\usepackage{amsthm}
\usepackage{amsmath}
\usepackage{amssymb}
\usepackage{amscd}
\usepackage{eucal}
\usepackage[dvips]{graphicx}

\newtheorem{definition}{Definition}
\newtheorem{proposition}{Proposition}

\newcommand{\Imag}{{\mbox{Im}\,}}

\newcommand{\Sg}{\mbox{sgn}}

\newcommand{\rmi}{{\rm i}}
\newcommand{\rmd}{{\rm d}}

\newcommand{\C}{\mathbb{C}}
\newcommand{\R}{\mathbb{R}}

\newcommand{\HH}{\mathbb{H}}

\newcommand{\SSS}{\mathbb{S}}
\newcommand{\cB}{{\cal B}}

\newcommand{\cP}{{\cal P}}
\newcommand{\cU}{{\cal U}}

\newcommand{\bk}{{\mathbf k}}
\newcommand{\bx}{{\mathbf x}}

\newcommand{\br}{{\mathbf r}}
\newcommand{\bo}{\mbox{\boldmath$\omega$}}
\newcommand{\bos}{\mbox{\boldmath$\scriptstyle\omega$}}

\newcommand{\beq}{\begin{equation}}
\newcommand{\eeq}{\end{equation}}
\newcommand{\Cs}{{\mbox{const.}}}

\begin{document}

\title{Hyperbolic geometrical optics: Hyperbolic glass}
\author{Enrico De Micheli}
\affiliation{IBF -- Consiglio Nazionale delle Ricerche, Via De Marini, 6 - 16149 Genova, Italy.}
\author{Irene Scorza}
\affiliation{Dipartimento di Matematica - Universit\`a di Genova \\ Via Dodecaneso, 35 - 16146 Genova, Italy.}
\author{Giovanni Alberto Viano}
\affiliation{Dipartimento di Fisica - Universit\`a di Genova \\
Istituto Nazionale di Fisica Nucleare - sez. di Genova\\ Via Dodecaneso, 33 - 16146 Genova, Italy.}

\begin{abstract}
We study the geometrical optics generated by a refractive index
of the form $n(x,y)=1/y$ $(y>0)$, where $y$ is the coordinate of the
vertical axis in an orthogonal reference frame in $\R^2$. We thus obtain
what we call ``hyperbolic geometrical optics'' since the ray trajectories
are geodesics in the Poincar\'e-Lobachevsky half--plane $\HH^2$. Then
we prove that the constant phase surface are horocycles and obtain the \emph{horocyclic waves},
which are closely related to the classical Poisson kernel and are the analogs of the Euclidean plane waves.
By studying the transport equation in the Beltrami pseudosphere, we prove
(i) the conservation of the flow in the entire strip $0<y\leqslant 1$
in $\HH^2$, which is the limited region of physical interest where the ray
trajectories lie; (ii) the nonuniform distribution of the density of trajectories:
the rays are indeed focused toward the horizontal $x$ axis, which is
the boundary of $\HH^2$. Finally the process of ray focusing and defocusing
is analyzed in detail by means of the sine--Gordon equation.
\end{abstract}

\maketitle

\newpage

\section{Introduction}
\label{se:introduction}
It is well known that the geometrical optics approximation of the wave equation
is related to the asymptotic form of the
integral representation of the field (if such exists), which is an exact solution of the
wave problem. Suppose, for instance, that the field in a uniform medium can be
written as an expansion in plane waves; the evaluation of this integral
by the stationary phase method yields an asymptotic series. Then the leading term
of this asymptotic expansion, which is composed by an amplitude and a phase,
can be extracted to yield the approximation.
The ray trajectories are the lines orthogonal to the constant phase surface and are
described by the eikonal equation; the amplitude satisfies the transport equation, whose
physical meaning is related to the conservation of the flow. In the simplest case of
uniform medium, whose refractive index $n$ is a real constant, the rays are straight lines
which are characterized by the following properties:
\begin{itemize}
\item[(i)] They are geodesics of the Euclidean space.
\item[(ii)] Phase and amplitude are real--valued functions.
\item[(iii)] They can be derived by the Fermat's principle.
\end{itemize}
Constrained by these properties the methods of geometrical optics
are rather limited and fail to explain several phenomena as, for instance,
the diffraction by a compact and opaque obstacle, that is the existence of
non--null field in the geometrical shadow which, for this reason, is usually
referred to as the classically (or geometrically) forbidden region.

In the decade 1950--1960 J. B. Keller \cite{Keller,Levy,Hansen} wrote several papers where he introduced
the so--called Geometrical Theory of Diffraction (GTD). The latter can be regarded
as an extension of geometrical optics, which accounts for diffraction by introducing
the diffracted rays in addition to the usual rays of geometrical optics. After these seminal
works there has been a steady flow of papers addressing various aspects of the theory.
On the one hand papers oriented to pure and applied electromagnetic theory, like radiation
and scattering of waves, antenna design, waveguide theory and so on \cite{Hansen}; on the other hand, a
highly theoretical and mathematically sophisticated theory of propagation of singularities
and diffraction of waves on manifolds \cite{Melrose}. In spite of these efforts and a wide literature
on these topics, not all the cases of interest have been studied. An example is
what we could call the ``hyperbolic geometrical optics'', that is the geometrical optics
generated by the rays in the specific case of a refractive index of the form $n(x,y)=1/y$ $(y>0)$,
where $y$ denotes a spatial coordinate, say vertical, in an appropriate orthogonal reference frame
in $\R^2$. As far as we know, this problem has never been treated, except for some very
marginal remarks (see, for instance, Ref. \onlinecite{Smirnov}), in spite of its intrinsic geometrical interest
and some possible applications to the physics of nonuniform optical fibers. It is precisely
the main purpose of the present paper to fill this gap.

Let us return to Keller's program of widening, from a geometrical viewpoint, the arena of
Euclidean geometrical optics. With this in mind we adopt, first of all, the Jacobi's form of the
principle of least action (instead of Fermat's), which concerns with the path of the system point
rather than with its time evolution \cite{Goldstein}.
More precisely, the Jacobi's principle (generally applied in mechanics)
can be formulated as follows: If there are no forces acting on the body, then the system travels along
the shortest path length in the configuration space. Here we assume a wide extension of
Jacobi's principle, which can be formulated as follows: the geodesics associated with the Riemannian
metric $n(x,y)\sqrt{\rmd x^2+\rmd y^2}$, i.e. the paths making the functional $\int n(x,y)\sqrt{\rmd x^2+\rmd y^2}$ stationary, are
nicknamed rays. In other words, in place of Fermat's principle which reads
$\delta\int_{P_0}^{P_1}\rmd t=0$, where $dt$ is the travel time measure, and $P_0$ and $P_1$
are prescribed starting and end points of the path, we write
\beq
\label{r1}
\delta\int_{P_0}^{P_1} n(x,y)\sqrt{\rmd x^2+\rmd y^2} = 0,
\eeq
or, equivalently,
\beq
\label{r2}
\delta\int_{x_0}^{x_1} F(x,y,y') \,\rmd x = 0,~~~~\left(F(x,y,y')=n(x,y)\sqrt{1+y'^2}\right),
\eeq
where $y'=\tan\alpha$, $\alpha$ being the angle that the tangent to the curve $y=y(x)$
forms with the $x$ axis.

The simplest realization of this Jacobi's principle consists in identifying
$n^2$ with the Riemann metric tensor $g_{ij}$, \emph{whenever this identification
is admissible}. This identification requires great caution, indeed;
the form $g_{ij}\rmd x^i \rmd x^j$ must be symmetric and positive definite, and this
poses a strict restriction. For instance, consider a refractive index (or, in mechanics, a potential)
of the following form: $n^2=1-V/E$, where $E$ is the energy of the incoming particle
and $V$ is the height of the potential, with $V>E$ as in the case of the tunnel effect.
In this situation the geometric interpretation of the trajectory as a real--valued geodesic
in a Riemannian manifold is no longer possible. The only chance remains to extend the
admissible values of the phase to imaginary and/or complex values and, consequently,
to speak of complex rays in the sense of Landau \cite{Landau}.

But let us return to the cases where this identification is admissible. As we already
mentioned, it is obviously possible in the case of a uniform nonabsorbing medium: in this
case we simply obtain a physical realization of Euclidean geometry. But it is also certainly
admissible when the refractive index is of the form introduced above, i.e., $n(x,y)=1/y$
($y>0$), where $y$ denotes the coordinate of the vertical axis in an orthogonal
reference frame in $\R^2$. In this case we are led to the Lobachevskian metric:
$\rmd s^2=(\rmd x^2+\rmd y^2)/y^2$. Then the rays are geodesics in the hyperbolic half--plane
(Poincar\'e half--plane): i.e., Euclidean half--circles with centers on the $x$ axis
(horizontal axis), or Euclidean straight lines normal to the $x$ axis.
Let us recall that the refractive index $n$ is defined as: $n=c/v_{\rm ph}$, where
$c$ is the light speed in vacuum, and $v_{\rm ph}$ is the phase velocity of
radiation of a specific frequency in a specific material.
Therefore $n \geqslant 1$, and in the case $n(x,y)=\frac{1}{y}$ only the strip $0<y\leqslant 1$
has physical interest; hence the actual rays will lie necessarily in this band.
Accordingly, hereafter, the only optical paths considered will be the Euclidean half--circles
with centers on the $x$ axis and radius $R$ bounded by $0<R\leqslant 1$.

The subsequent step in developing an optical geometry consists in finding the constant phase
surfaces and, accordingly, describing the analog of the Euclidean plane wave.
This problem will be solved in Sec. \ref{se:global}, studying some geometrical
properties of horocycles and introducing what we call \emph{horocyclic waves}, which
play in hyperbolic geometrical optics the same role as the plane waves do in the Euclidean one.
At this point we have the main ingredients needed for writing the geometrical approximation
of the wave function; what it is still missing is an analysis of the amplitude
and of the related flux density. This latter problem can be analyzed
at two different levels. First we prove that the flow of rays is conserved:
once a pointlike source is fixed, no ray will be absorbed or created.
This result will be proved in Sec. \ref{se:global}.
A more subtle question is the following: Is the flow of the ray trajectories
homogeneous or do the rays focus? This issue, besides its intrinsic geometrical interest,
could in our opinion be of some interest in possible applications to the
propagation in optical fibers with non--uniform refractive index \cite{Okoshi}.
This problem will be analyzed in detail in Sec. \ref{se:focusing}. First
we study the transport equation in the Beltrami pseudosphere, and prove that
the flow of ray trajectories is not homogeneous, but there is a focusing of
rays on the horizontal $x$ axis. Glancing to possible applications to propagation in
optical fibers this result suggests a conjecture indicating a strong ray focusing
along the fiber axis, when the refractive index profile
in the fiber is of hyperbolic type, instead of paraboliclike, as is customary.
Next, this problem will be reconsidered by
studying the variation of the angle that the tangent to the meridian
of the Beltrami pseudosphere makes with the rotation axis of this surface,
which can be indeed represented as a surface of revolution generated by a curve in $\R^3$.
This leads to the sine--Gordon equation and provides a more precise description of the ray
focusing and defocusing processes.
This analysis is necessarily local, since the problem is worked out inside each horocycle;
at the end of Sec. \ref{se:focusing}, we show how to pass from a local description
of the flow inside each horocycle to a global one.

Finally, in the Appendix, the geometric and algebraic ingredients which occur in
Secs. \ref{se:global} and \ref{se:focusing} will be given. This appendix is
split in three parts: the first part is devoted to the various models of
hyperbolic geometry and to the conformal maps which allows the transformation
between them; in the second part we study the group $SU(1,1)$, which
acts transitively on the non--Euclidean disk, and prove some relationships connecting
the spherical functions to the horocyclic waves; the last part is devoted to the
Beltrami pseudosphere.

\section{The flow in the strip $\mathbf{0<y\leqslant 1}$}
\label{se:global}

\subsection{Variational minimization of the Jacobi's functional and the rays in hyperbolic geometrical optics}
\label{subse:variational}

Let us consider the upper half--plane model of the hyperbolic
two--dimensional space $\HH^2$: i.e., $U=\{z=x+iy\,:\, y>0\}$ equipped with the metric $d$ derived
from the differential $\rmd s=|\rmd z|/\Imag{z}$ (see the Appendix). Then we apply the
typical methods of variational calculus to the Jacobi functional
\begin{equation*}
J=\int_{P_0}^{P_1}\frac{\sqrt{(\rmd x)^2+(\rmd y)^2}}{y},
\end{equation*}
or, equivalently,
\begin{equation*}
J=\int_{x_0}^{x_1}\frac{\sqrt{1+(y')^2}}{y}\,\rmd x
\end{equation*}
($P_0$ and $P_1$ denote two points of the ambient space where light propagates).
First we prove the following proposition, which refers to the whole upper--half plane $U$.
\begin{proposition}
\label{pro:1}
(i) Let $J$ be the following functional
\beq
\label{quattro}
J=\int_{x_0}^{x_1}\frac{\sqrt{1+(y')^2}}{y}\,\rmd x,
\eeq
and let $F$ denote the integrand of (\ref{quattro}). The Euler--Lagrange equation for
this functional reads
\beq
\label{cinque}
yy'' + y'^2+1=0.
\eeq
(ii) The extremals of functional (\ref{quattro}) are Euclidean half--circles with
centers on the $x$ axis, or Euclidean straight lines normal to the $x$ axis
lying in the half--plane $y>0$. These are the geodesics in the hyperbolic geometry
realized in the half--plane $y>0$. \\
(iii) The Weierstrass condition for the functional (\ref{quattro}) reads
\beq
\label{sei}
F_{y'y'} = \frac{1}{y(1+y'^2)^{3/2}} > 0 ~~~~~ (y>0),
\eeq
and it is satisfied for any $y'$. \\
(iv) There exists a field of extremals of functional (\ref{quattro}), and the
transversality condition becomes an orthogonality condition of these extremals
to the curve $\Phi(x,y)=\Cs$ (constant phase curve), which satisfies the
following equation (eikonal equation):
\beq
\label{otto}
g^{ij}\frac{\partial\Phi}{\partial x_i} \frac{\partial\Phi}{\partial x_j} = 1,
\eeq
where $x_1=x$, $x_2=y$, and $g_{ij}$ is the metric tensor.
\end{proposition}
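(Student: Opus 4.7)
The plan is to handle (i)--(iii) by direct computation from the Lagrangian $F(y,y')=\sqrt{1+y'^2}/y$, and then devote most of the effort to (iv), which is the only step requiring more than bookkeeping.

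For (i), I would apply the Euler--Lagrange equation $\frac{\rmd}{\rmd x}F_{y'}-F_y=0$ to $F$. One computes $F_y=-\sqrt{1+y'^2}/y^2$ and $F_{y'}=y'/[y\sqrt{1+y'^2}]$, expands $\frac{\rmd}{\rmd x}F_{y'}$ by the chain rule, and after multiplying through by $y^2(1+y'^2)^{3/2}$ collects terms to obtain $yy''+y'^2+1=0$. Because $F$ is $x$--independent, a cleaner route is the Beltrami first integral $F-y'F_{y'}=1/R$ (constant), which simplifies to
\beq
\frac{1}{y\sqrt{1+y'^2}}=\frac{1}{R},
\eeq
and differentiating once recovers (\ref{cinque}). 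For (ii), I use the same first integral $y^2(1+y'^2)=R^2$, separate variables as $\rmd x=\pm y\,\rmd y/\sqrt{R^2-y^2}$, and integrate to get $(x-x_0)^2+y^2=R^2$: Euclidean half--circles centered on the $x$ axis. Vertical straight lines appear as the degenerate case $R\to\infty$ or, equivalently, by reparameterizing the variational problem by $y$ when $y'\to\infty$. For (iii), direct differentiation of $F_{y'}=y'/[y\sqrt{1+y'^2}]$ yields $F_{y'y'}=1/[y(1+y'^2)^{3/2}]$, strictly positive for $y>0$, so the strong Legendre/Weierstrass condition holds everywhere in $U$.

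The substantive step is (iv). I first construct a Mayer field of extremals by taking the one--parameter family of half--circles either through a fixed point or orthogonal to a prescribed initial wavefront; since two distinct geodesics of $\HH^2$ meet in at most one point and the geodesic flow is a local diffeomorphism, this yields a smooth slope field $y'=p(x,y)$ on an open region of $U$. The classical transversality condition on a curve $y=\phi(x)$ reads $F+(\phi'-y')F_{y'}=0$, which upon substitution reduces to
\beq
\frac{1+y'\phi'}{y\sqrt{1+y'^2}}=0,
\eeq
i.e. $y'\phi'=-1$. This is Euclidean perpendicularity, but because the metric $g_{ij}=\delta_{ij}/y^2$ is conformal to the Euclidean one, Euclidean orthogonality coincides with hyperbolic orthogonality, so the curves $\Phi=\Cs$ cross the extremals at right angles in $\HH^2$.

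To derive the eikonal equation (\ref{otto}) I take $\Phi(x,y)$ to be the hyperbolic arc length from the initial wavefront along the field (Hilbert's invariant integral). Passing to the Hamiltonian picture with canonical momentum $p=F_{y'}$, I invert $p=y'/[y\sqrt{1+y'^2}]$ to get $y'=py/\sqrt{1-p^2y^2}$, and the Legendre transform gives $H(x,y,p)=py'-F=-\sqrt{1-p^2y^2}/y$. The Hamilton--Jacobi equation $\Phi_x+H(x,y,\Phi_y)=0$, squared, becomes $y^2(\Phi_x^2+\Phi_y^2)=1$, which is exactly $g^{ij}\Phi_{,i}\Phi_{,j}=1$ with $g^{ij}=y^2\delta^{ij}$.

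The main obstacle, in my view, lies in (iv): not in the calculations themselves, but in the conceptual step of recognizing that the purely analytic transversality identity of the plane variational problem automatically encodes hyperbolic orthogonality once the conformal structure of $g_{ij}$ is acknowledged, and then packaging the Hamilton--Jacobi equation into the intrinsic metric form $g^{ij}\Phi_{,i}\Phi_{,j}=1$ that is used in the remainder of the paper.
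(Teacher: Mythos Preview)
Your proposal is correct. The paper itself does not supply a proof of Proposition~\ref{pro:1}; it simply states that ``the proof makes use of standard procedures and can be found, for instance, in Ref.~\onlinecite{Smirnov}.'' What you have written is precisely such a standard treatment: the Beltrami first integral for (i)--(ii), direct differentiation for (iii), and the classical transversality and Hamilton--Jacobi machinery for (iv), all of which check out line by line. In effect you have reconstructed the details that the paper relegates to a textbook citation, and your explicit remark that the conformal factor $1/y^2$ makes Euclidean and hyperbolic orthogonality coincide is exactly the hinge on which the paper's subsequent identification of horocycles as constant-phase curves rests.
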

\begin{proof}
The proof makes use of standard procedures and can be found, for instance,
in Ref. \onlinecite{Smirnov}.
\end{proof}

\noindent
{\bf Remark.}
Let us recall once again that the domain of physical interest where the optical paths necessarily lie
(in view of the fact that $n\geqslant 1$)
is the strip $0<y\leqslant 1$; therefore we shall consider only a subclass of the extremals
of functional (\ref{quattro}): i.e., the half--circles with centers on the
$x$ axis and radius bounded by $0<R\leqslant 1$.

\subsection{Poisson--kernel and horocyclic waves}
\label{subse:poisson}
Let us now give a more precise formulation of the physical problem.
Suppose that a pointlike source of light is pushed to $-\infty$ on the $(x,y)$ plane.
For the sake of simplicity, here we limit ourselves to the scalar representation of light,
and phenomena associated with polarization will not be considered.
From Proposition \ref{pro:1} it follows that light rays are half--circles
with centers on the $x$ axis.
For several reasons which will appear clear
in what follows, it is convenient to map conformally the half--plane $y>0$
into the unit disk $|\zeta|<1$, which amounts to pass from the Poincar\'e
half--plane model $U$ to the Poincar\'e disk model
$D$ (see the Appendix and Fig. \ref{fig_1}). The appropriate conformal mapping is given by:
$\zeta = \rmi(z-\rmi)/(z+\rmi)$ ($z=x+\rmi y$; $\zeta=\xi+\rmi\eta$).

\begin{figure}[tb]
\begin{center}
\leavevmode
\includegraphics[width=9cm]{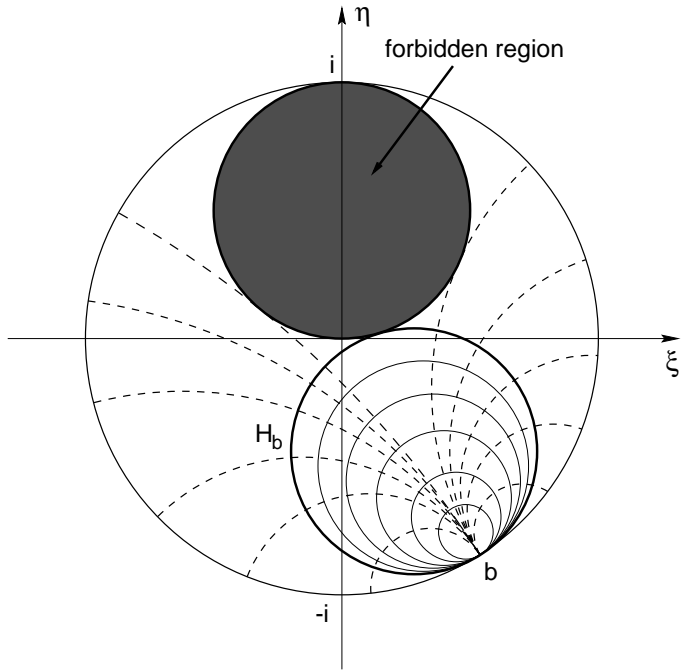}
\caption{\label{fig_1} Horocyclic flow outside the forbidden region in the Poincar\'{e} disk $D$.}
\end{center}
\end{figure}

In the unit disk the light source will be located at $\zeta=\rmi$.
The band $y>1$ will be mapped, in the
$\zeta$--plane, into the disk tangent to the boundary $B$ of $D$ in $\rmi$
with Euclidean radius $\frac{1}{2}$, and represents
the {\it forbidden region} for the light rays. The circular arcs lying in the
half--plane $y>0$  and normal to the $x$ axis will be mapped, in the unit disk,
into circular arcs perpendicular to the boundary $|\zeta|=1$, which are
precisely the geodesics of the hyperbolic geometry in the unit disk model.

From the transversality condition
[see statement (iv) of Proposition \ref{pro:1}], it follows that the
constant phase curve is the curve that intersects orthogonally the
extremals of functional (\ref{quattro}): i.e., the geodesics. In the unit
disk, parallel geodesics are geodesics corresponding to the same point
$b = e^{\rmi\phi}$ on the boundary $B$ of $D$. Therefore, in the physical
problem being treated, the circles tangent to the unit circle at the point
$b$, which intersects orthogonally the pencil of {\it parallel straight lines}
(i.e., arcs of circle orthogonal to $B$) are the constant phase curves,
they are a family of {\it horocycles}, and are denoted by $H_b$.

We can now state the following proposition.
\begin{proposition}
\label{pro:2}
(i) The Poisson kernel
\beq
\label{diciotto}
P(\zeta,b) = \frac{1-|\zeta|^2}{1+|\zeta|^2-2|\zeta|\cos(\theta-\phi)}
~~~~~(\zeta=|\zeta|e^{\rmi\theta};\, b=e^{\rmi\phi}),
\eeq
is constant on each horocycle $H_b$ with normal $b$. \\
(ii) The function
\beq
\label{diciannove}
[P(\zeta,b)]^\nu =
\left[\frac{1-|\zeta|^2}{1+|\zeta|^2-2|\zeta|\cos(\theta-\phi)}\right]^\nu
~~~~~~(\nu \in \C),
\eeq
is an eigenfunction of the Laplace--Beltrami operator on the
hyperbolic disk $D$ corresponding to the eigenvalue $\nu(\nu-1)$. \\
(iii) The hyperbolic waves ({\it horocyclic waves}) are represented
by the following expression:
\beq
\label{venti}
e^{\nu\langle\zeta,b\rangle}=
\left[\frac{1-|\zeta|^2}{1+|\zeta|^2-2|\zeta|\cos(\theta-\phi)}\right]^\nu~~~~~(\nu\in\C),
\eeq
where $\langle\zeta,b\rangle$ is the hyperbolic distance between
the origin of $D$ and the horocycle of normal $b$ passing through
$\zeta \in D$. \\
(iv) The conical functions $\cP_{-\frac{1}{2}+\rmi\lambda}(\cosh r)$
(i.e., the first kind Legendre functions of index
$(-\frac{1}{2}+\rmi\lambda)$ $(\lambda\in\R)$) can be represented by
\beq
\label{ventuno}
\cP_{-\frac{1}{2}+\rmi\lambda}(\cosh r)=\int_B e^{(\frac{1}{2}-\rmi\lambda)\langle\zeta,b\rangle}\,\rmd b
~~~~(\lambda\in\R, B=\{\zeta \,:\, |\zeta|=1\}),
\eeq
and correspond to the fundamental series of the irreducible unitary
representation of the group $SU(1,1)$, which acts transitively on
the hyperbolic disk $D$. \\
(v) The following equality holds:
\beq
\label{26bis}
\cP_{-\frac{1}{2}+\rmi\lambda}(\cosh r)=
\cP_{-\frac{1}{2}-\rmi\lambda}(\cosh r)~~~~(\lambda\in\R).
\eeq
\end{proposition}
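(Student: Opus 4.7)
The plan is to prove (v) by exhibiting both sides of (\ref{26bis}) as radial eigenfunctions of a common Laplace--Beltrami equation on $D$, regular at the origin and sharing the same value there; uniqueness of the regular solution of the resulting radial ODE then forces equality. The key observation is that the spectral parameter $\lambda$ enters the relevant eigenvalue only through $\lambda^2$.

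First I would rewrite the right-hand side of (\ref{26bis}) by applying the integral representation (\ref{ventuno}) with $\lambda$ replaced by $-\lambda$, obtaining
\begin{equation*}
\cP_{-\frac{1}{2}-\rmi\lambda}(\cosh r) = \int_B e^{(\frac{1}{2}+\rmi\lambda)\langle\zeta,b\rangle}\,\rmd b.
\end{equation*}
By part (ii), each integrand $e^{\nu\langle\zeta,b\rangle} = [P(\zeta,b)]^\nu$ is an eigenfunction of the Laplace--Beltrami operator $\Delta_D$ with eigenvalue $\nu(\nu-1)$, and a direct computation gives $\bigl(\tfrac{1}{2}\pm\rmi\lambda\bigr)\bigl(-\tfrac{1}{2}\pm\rmi\lambda\bigr) = -\tfrac{1}{4}-\lambda^2$ for either choice of sign. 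Since $\Delta_D$ commutes with integration over $B$, both $\cP_{-1/2+\rmi\lambda}(\cosh r)$ and $\cP_{-1/2-\rmi\lambda}(\cosh r)$ satisfy the same equation $\Delta_D f = -\bigl(\tfrac{1}{4}+\lambda^2\bigr) f$. Moreover, the rotation invariance of the boundary measure $\rmd b$ around the origin guarantees that the two integrals depend only on the hyperbolic distance $r$ from the origin to $\zeta$, so both are genuinely radial.

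Restricting the eigenvalue equation to radial functions yields the Legendre ODE in the variable $u = \cosh r$, whose coefficients depend on $\lambda$ only through $\lambda^2$, making the invariance $\lambda \mapsto -\lambda$ manifest at the ODE level. Evaluating at $r=0$ (i.e., $\zeta = 0$) one has $\langle 0,b\rangle = 0$ for every $b \in B$, by the very definition in (iii) of the hyperbolic distance between the origin and the horocycle through $\zeta$ with normal $b$; hence at $r=0$ both integrals reduce to the same constant $\int_B \rmd b$, independent of $\lambda$. The conical function is the unique solution of the radial Legendre ODE regular at $r=0$, up to normalization, and both $\cP_{-1/2\pm\rmi\lambda}(\cosh r)$ are such regular solutions (regularity being inherited from the uniform integrability of $[P(\zeta,b)]^\nu$ in $b \in B$ as $\zeta \to 0$), so equality at the origin forces identity for all $r \geqslant 0$. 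The only step requiring some care is precisely this regularity verification, which rules out any admixture of the second, singular Legendre solution; it is a standard fact about the Mehler--Fock conical functions, but one that should be recorded explicitly for the argument to be complete.
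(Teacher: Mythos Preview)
You address only part~(v), leaving (i)--(iv) untreated; in the paper those are dispatched by citation (Nevanlinna for (i), Helgason for (ii)--(iii)) and by the substitution $|\zeta|=\tanh(r/2)$ that turns the Poisson integral into the Legendre integral representation for (iv), so they should at least be acknowledged for completeness. For part~(v) your argument is essentially the paper's own (carried out in the Appendix, Proposition~6): both sides of~(\ref{26bis}) are exhibited as radial eigenfunctions of $\Delta_D$ sharing the eigenvalue $\nu(\nu-1)=-\tfrac14-\lambda^2$ and the common value $1$ at $r=0$, whence uniqueness of the regular radial solution gives the identity; the paper phrases the symmetry as $\Phi_\nu=\Phi_{1-\nu}$ for the spherical function and is terser about the uniqueness step you rightly flag, but the substance is the same.
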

\begin{proof}
(i) The level lines of the Poisson kernel $P(\zeta,b)$ are the circles
tangent to the unit circle at the point $b=e^{\rmi\phi}$: i.e., the
images of the horocycles $H_b$ with normal $b$ (see Ref. \onlinecite{Nevanlinna}). \\
(ii) The Laplace--Beltrami operator $\Delta_D$ on the hyperbolic unit
disk $D$ is given by \cite{Helgason1}
\beq
\label{ventidue}
\Delta_D = \frac{1}{4}\left[1-(\xi^2+\eta^2)\right]^2\left(\frac{\partial^2}{\partial \xi^2}
+\frac{\partial^2}{\partial \eta^2}\right).
\eeq
If $\nu \in \C$ is any complex number, a direct computation gives \cite{Helgason1}
\beq
\label{ventitre}
\Delta_D P^\nu(\zeta,b)=\nu(\nu-1) P^\nu(\zeta,b).
\eeq
(iii) In the Euclidean case the function $\bx\to e^{\rmi k(\bx,\bos)}$,
where $k\in\R$, $\bo\in\SSS^{(n-1)}$, $\bx\in\R^n$, represents a plane wave with
normal $\bo$. It is indeed constant on each hyperplane perpendicular to $\bo$,
and furthermore is an eigenfunction of the Laplacian on $\R^n$. The
geometric analog of the plane wave in the case of the hyperbolic disk $D$ is
the function represented by the equality (\ref{venti}) (see Ref. \onlinecite{Helgason2}).
In fact, it is an eigenfunction of the Laplace--Beltrami operator on $D$, as
proved by statement (ii) [see Eq. (\ref{ventitre})]. Further, putting
$\theta=\phi$ in formulae (\ref{diciotto}) and (\ref{diciannove}), we have:
\beq
\label{ventiquattro}
\ln\frac{1-|\zeta|^2}{1+|\zeta|^2-2|\zeta|}=\ln\frac{1+|\zeta|}{1-|\zeta|}=
d(0,\zeta)=\langle |\zeta|e^{i\phi},e^{i\phi} \rangle= \langle \zeta,b \rangle,
\eeq
where $d(0,\zeta)=\ln[(1+|\zeta|)/(1-|\zeta|)]$ is the hyperbolic distance
between the origin and the point $\zeta\in D$ (see the Appendix). Therefore,
$\langle \zeta,b \rangle$ is the hyperbolic analog of $(\bx,\bo)$. In fact,
in view of statement (i), $\langle \zeta,b \rangle$ is the distance between
the origin and the horocycle of normal $b$ passing through $\zeta\in D$,
assuming that the origin falls outside the horocycle; $\langle \zeta,b \rangle$
is positive if the origin is external to the horocycle, while it is negative
($\langle \zeta,b \rangle=\ln[(1-|\zeta|)/(1+|\zeta|)]$) if the origin is
internal to the horocycle. \\
(iv) If we put: $\xi=\tanh(r/2)\cos\theta$, $\eta=\tanh(r/2)\sin\theta$,
then $|\zeta|=\tanh(r/2)$. The Riemannian metric
$\rmd s^2=[4(\rmd\xi^2+\rmd\eta^2)/(1-\xi^2-\eta^2)^2]$ becomes
$\rmd s^2=\rmd r^2+\sinh^2 r\, \rmd\theta^2$. By the use of this substitution in the
expression of the Poisson kernel (\ref{diciotto}) or (\ref{diciannove}), we have:
\beq
\label{venticinque}
\left[\frac{1-|\zeta|^2}{1+|\zeta|^2-2|\zeta|\cos(\theta-\phi)}\right]^\nu=
\frac{1}{[\cosh r-\sinh r \cos(\theta-\phi)]^\nu}~~~~~(\nu\in\C),
\eeq
and the integral sum of {\it horocyclic waves} [see statement (iii)] gives
(see Ref. \onlinecite{Helgason2} and Proposition \ref{pro:6} in the Appendix):
\beq
\label{ventisei}
\int_B e^{\nu \langle \zeta,b \rangle}\,\rmd b=
\frac{1}{2\pi}\int_0^{2\pi}\left(\frac{1}{\cosh r+\sinh r \cos\phi}\right)^\nu\,\rmd\phi=
\cP_{-\nu}(\cosh r)~~~~(\nu\in\C),
\eeq
where $B$ is the boundary of the hyperbolic disk $D$, and
$\cP_{-\nu}(\cosh r)$ are the first kind Legendre functions
\cite{Bateman}. Finally, setting
$\nu=\frac{1}{2}-\rmi\lambda$ $(\lambda\in\R)$ we obtain the
conical functions $\cP_{-\frac{1}{2}+\rmi\lambda}(\cosh r)$,
which correspond to the fundamental series of the irreducible
unitary representation of the group $SU(1,1)$:
i.e., the group of the matrices of the form \cite{Vilenkin}
{\arraycolsep=2pt
\renewcommand{\arraystretch}{0.7}
$\left(\begin{array}{cc} a & c \\ \bar{c} & \bar{a} \end{array}\right)$
},
$|a|^2-|c|^2=1$; $a,c \in \C$, which acts as a group of isometries
of the hyperbolic disk $D$ by means of the map
\beq
\label{ventisette}
g(\zeta) = \frac{a\zeta+c}{\bar{c}\zeta+\bar{a}}~~~~~(\zeta\in D).
\eeq
(v) Equality (\ref{26bis}) is proved in the Appendix (see Proposition \ref{pro:6}).
\end{proof}

~

\noindent
{\bf Remark.} It is well known that the classical Fourier transform refers
to the decomposition of a function, belonging to an appropriate space, into
exponentials of the form $e^{\rmi kx}$ ($k$ real), which can also be viewed
as the irreducible unitary representation of the additive group of real
numbers. Analogously, the exponentials $e^{\rmi (\bk, \bx)}$ are characters
of the group $\R^2$. But the hyperbolic disk is not a group. Therefore a
straightforward generalization of the exponential for $D$ is not possible.
Nevertheless, in view of the fact that the function  $\cP_{-\nu}(\cosh r)$
corresponds to the fundamental series  of the irreducible unitary representation
of the group $SU(1,1)$ for $\nu=\frac{1}{2}-\rmi \lambda$, the exponential
$e^{\left(\frac{1}{2}-\rmi \lambda\right) \langle\zeta, b \rangle}$ ($\lambda\in\R$) represents
the analog of the Euclidean exponential, and plays the same role in the hyperbolic
Fourier analysis \cite{Helgason2}.

\subsection{Conservation of the flow}
\label{subse:conservation}
As already said in the Introduction, the ray trajectories are the lines orthogonal
to the constant phase surface, and are described by the eikonal equation;
moreover, $\langle\zeta, b \rangle$ is the hyperbolic distance between the origin and the
horocycle $H_b$ of normal $b$ passing through $\zeta$. Therefore, in close analogy with
the Euclidean optical geometry, and recalling that
$\cP_{-\frac{1}{2}+\rmi\lambda}(\cosh r)=
\cP_{-\frac{1}{2}-\rmi\lambda}(\cosh r)$ ($\lambda\in\R$) [see statement (v) of Proposition \ref{pro:2}],
the expression of the analog of the Euclidean plane wave $e^{\rmi kx}$ ($k\in\R$) can be written
as follows: $e^{(\frac{1}{2}-\rmi\lambda)\langle\zeta, b \rangle}$
($\lambda\in\R$). Thus the geometrical approximation of the wave function
$\psi$ can be obtained by multiplying $e^{(\frac{1}{2}-\rmi\lambda)\langle\zeta, b \rangle}$
times a function which represents the amplitude. Then we can state the following proposition.
\begin{proposition}
\label{pro:3}
The geometrical approximation of the wave function $\psi$ reads:
\beq
\label{amplitude}
\psi(\zeta, \lambda,b)=A(\lambda)e^{(\frac{1}{2}-\rmi\lambda)\langle \zeta,b \rangle}
~~~~~~(\lambda \in \R, \zeta \in D, b \in B),
\eeq
and the flow in the entire strip $0<y\leqslant 1$ is conserved.
\end{proposition}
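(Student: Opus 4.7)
The plan is to split Proposition \ref{pro:3} into two claims: (a) that the geometric-optics ansatz is forced to take the form $\psi=A(\lambda)e^{(\frac{1}{2}-\rmi\lambda)\langle\zeta,b\rangle}$, and (b) that the associated flow is divergence-free on the physical region. Both rely entirely on ingredients already assembled in Propositions \ref{pro:1} and \ref{pro:2}, so no new geometric input is required.

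For (a) I would invoke Proposition \ref{pro:2}(iii)--(v) to identify the hyperbolic counterpart of the Euclidean plane wave $e^{\rmi k x}$ as the horocyclic wave $e^{\nu\langle\zeta,b\rangle}$ with $\nu=\frac{1}{2}-\rmi\lambda$. This particular choice of $\nu$ is dictated by the requirement that the wave belong to the fundamental series of $SU(1,1)$, since precisely this parametrization produces the conical functions $\cP_{-\frac{1}{2}+\rmi\lambda}(\cosh r)$ via the integral (\ref{ventuno}) and satisfies the reality constraint (\ref{26bis}). Multiplying by a spectral weight $A(\lambda)$ then yields formula (\ref{amplitude}), in direct parallel with the Euclidean Fourier synthesis of plane waves by real wave numbers.

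For (b) I would set $f(\zeta,b)=\langle\zeta,b\rangle$ and write $\psi=M(\zeta)e^{\rmi S(\zeta)}$ with real modulus $M=|A(\lambda)|e^{f/2}$ and real phase $S=-\lambda f$. The hyperbolic current attached to $\psi$ is $\vec{J}=M^2\nabla_D S$, and conservation of the flow amounts to the transport equation
\begin{equation*}
2\,\nabla_D M\cdot\nabla_D f + M\,\Delta_D f = 0.
\end{equation*}
Two identities for $f$ suffice: the eikonal relation $|\nabla_D f|^2=1$ (essentially the content of Proposition \ref{pro:1}(iv)) and the auxiliary equation $\Delta_D f=-1$. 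Both drop out of Proposition \ref{pro:2}(ii): expanding $\Delta_D e^{\nu f}=\nu(\nu-1)e^{\nu f}$ gives $\nu^2|\nabla_D f|^2+\nu\,\Delta_D f=\nu^2-\nu$, and matching coefficients of $\nu$ yields the two identities simultaneously. Equivalently, the case $\nu=1$ says that $e^f=P(\zeta,b)$ is harmonic on $D$, which is the classical harmonicity of the Poisson kernel. Since $M=|A(\lambda)|e^{f/2}$ gives $\nabla_D M=(M/2)\nabla_D f$, substitution collapses the transport equation to $M\cdot 1 - M\cdot 1 = 0$, so $\nabla_D\cdot\vec{J}\equiv 0$ pointwise.

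The main obstacle is conceptual rather than computational: one must keep track of the fact that $\nabla_D$, $\Delta_D$ and the flux $\vec{J}$ refer to the \emph{hyperbolic} metric $\rmd s^2=y^{-2}(\rmd x^2+\rmd y^2)$ and not to the ambient Euclidean one, since otherwise the cancellation between $|\nabla_D f|^2=+1$ and $\Delta_D f=-1$ is spurious. Once this is in place, $\nabla_D\cdot\vec{J}=0$ holds at every point of $D$, and the divergence theorem applied to any closed hyperbolic contour contained in the conformal image of the strip $0<y\leqslant 1$ (that is, $D$ minus the forbidden disk tangent to $B$ at $b=\rmi$) shows that inflow and outflow of rays balance exactly. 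This is the claimed conservation.
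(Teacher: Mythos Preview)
Your argument is correct, but it follows a genuinely different route from the paper's.

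The paper proves Proposition~\ref{pro:3} \emph{geometrically}: it passes to the half--plane model via the conformal map $\sigma$, considers the horocycle flow $h_{j,b}$ on the unit tangent bundle $T_1U$, observes that for $b=\infty$ this flow is simply horizontal translation of the vertical geodesics, and then uses an isometry $M_b\in PSL_2(\R)$ with $M_b(\infty)=b$ to transport $h_{j,\infty}$ to $h_{j,b}$. Since isometries preserve the count of geodesics entering any horocycle, the amplitude cannot depend on $\zeta$ or $b$; formula~(\ref{amplitude}) and the conservation statement follow together.

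Your approach is instead \emph{analytic}: you extract the two pointwise identities $|\nabla_D f|^2=1$ and $\Delta_D f=-1$ for $f=\langle\zeta,b\rangle$ by matching powers of $\nu$ in the eigenvalue equation~(\ref{ventitre}), and then verify directly that the transport equation $2\,\nabla_D M\cdot\nabla_D f+M\,\Delta_D f=0$ collapses to $M-M=0$ once $M=|A(\lambda)|e^{f/2}$. This is a clean computation and has the virtue of explaining \emph{why} the shift $\frac12$ in the exponent is exactly what is needed: it is the unique choice of modulus growth that cancels $\Delta_D f=-1$. The paper's argument, by contrast, makes the conservation visibly a consequence of the transitivity of the isometry group on horocycles, which is closer to the ray--counting intuition invoked in the Remark following the proposition. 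One small point: your part~(a) takes $A=A(\lambda)$ as an ansatz justified by Euclidean analogy, whereas the paper \emph{derives} the $\zeta,b$--independence of $A$ from the horocycle--flow argument; your machinery could recover this too (the transport equation with a general $A(\zeta)$ forces $\nabla_D|A|\cdot\nabla_D f=0$ along every $b$--direction, hence $|A|=\Cs$), and it would strengthen the write--up to say so.
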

\begin{proof}
Let $\sigma$ be the conformal map
\beq
\label{frazione}
z=\sigma(\zeta)=-\rmi\frac{\zeta+\rmi}{\zeta-\rmi},
\eeq
defined in the Appendix, that transfers the geometry of $D$ into $U$.
Since $\sigma(0)=\rmi$ and $\sigma(\rmi)=\infty$, then the image by $\sigma$
of the horocycle $H_\rmi$ passing through $\zeta=0$ is the horizontal line
$\widetilde{H}_\infty=\{x+\rmi y \,:\, y=1\}$ in $U$ (the horocycles in the
Poincar\'e half--plane will be hereafter denoted by $\widetilde{H}_b$).
The image by $\sigma$ of the horocycle $H_{\sigma^{-1}(b)}$ tangent to $H_\rmi$
in $D$ is the horocycle $\widetilde{H}_b$ of radius $\frac{1}{2}$ through
$b \in \R$ and tangent to the horizontal line $\widetilde{H}_\infty$ (in order
to avoid proliferation of notations, we denote by the same letter $b$ both
the points on the boundary $B$ of $D$ and the corresponding points belonging
to the boundary of $\HH^2$, i.e. belonging to $\R$).

We already saw that the horocycle $\widetilde{H}_b$ of normal $b$ is perpendicular
to each geodesic starting from $b$. To calculate the amplitude of the wave function,
we must see how many geodesics perpendicular to $\widetilde{H}_b$ intersect
$\widetilde{H}_b$, with the additional condition that these geodesics belong to the
band $0 < y \leqslant 1$. This corresponds to find the amount of normal vectors at
$\widetilde{H}_b$, with unit norm, that are tangent vectors of geodesics in the
band $0 < y \leqslant 1$.

In general, if $b$ is a point in $\R \cup \{\infty\}$ and $T_1 U$ is the unit
tangent bundle of $U$, then the horocycle flow $h_{j,b}:T_1 U \longrightarrow T_1 U$
is the flow which slides the inward normal vectors to each $\widetilde{H}_b$
to the right along $\widetilde{H}_b$ at unit speed. To find the equation of the
flow $h_{j,b}$, first we consider the flow $h_{j,\infty}$ of geodesics
perpendicular to the horocycle $\widetilde{H}_\infty$ of normal $\infty$.
Then we choose a transformation $M_b$ which maps the horocycle
$\widetilde{H}_\infty$ into the horocycle $\widetilde{H}_b$. In particular,
the map $M_b$ transfers the flow $h_{j,\infty}$ into the flow $h_{j,b}$.\\
From the definition,
\beq
\label{prop31}
h_{j,\infty}(v_i)=
\begin{pmatrix}
1 & j \\
0 & 1
\end{pmatrix}
v_i,
\eeq
where $v_i$ denotes the unit vector vertically upwards based at $i \in U$.
This is because in the simplest case of horocycle flow $h_{j,\infty}$,
the geodesics perpendicular to $\widetilde{H}_\infty$ are vertical lines
and the isometry sending one vertical line into another vertical line is
the horizontal translation. Therefore, the horocycle flow along
$\widetilde{H}_\infty$ is simply the horizontal translation.

Let us now consider the transformation $M_b$ such that $M_b(\infty)=b$.
Then, the horocycle flow $h_{j,b}$ along $\widetilde{H}_b$ is the image
of $h_{j,\infty}$ by $M_b$, hence
\beq
\label{prop32}
h_{j,b}(v)= M_b
\begin{pmatrix}
1 & j \\
0 & 1
\end{pmatrix}
v.
\eeq
It is clear from the definition that the amount of geodesics in the flow
$h_{j,b}$ does not depend on the radius of the horocycle. Given two different
points $b_1$ and $b_2$ in the boundary of the hyperbolic plane, then the
composition of $M_{b_1}$ and $M_{b_2}^{-1}$ sends the point $b_2$ in $b_1$.
Moreover,  $M_{b_1}\circ M_{b_2}^{-1}$ sends the horocycle flow $h_{j,b_2}$
into the horocycle flow $h_{j,b_1}$. This proves that the amplitude of the
wave does not depend on $b$ and $\zeta$.

Using Proposition \ref{pro:2}, we obtain that there exists a function $A(\lambda)$
independent of $\zeta$ and $b$ such that Eq. (\ref{amplitude}) is satisfied, and
the conservation of the flow along the entire strip $0<y\leqslant 1$ is proved.
\end{proof}

~

\noindent
{\bf Remark.} It is interesting to compare the propagation of light in vacuum
with that within the strip $0<y\leqslant 1$ belonging to $\HH^2$. In vacuum each ray cuts
orthogonally all the constant phase planes: i.e., each ray emerging from
a plane cuts orthogonally all the other parallel planes. In $\HH^2$
propagation proceeds in a completely different form. Take two horocycles
lying in the strip $0<y\leqslant 1$, and tangent at the point
$z=(1+\rmi)/2$: the first horocycle, denoted by $\widetilde{H}_0$, has normal $b_0=0$;
the second one, denoted by $\widetilde{H}_1$, has normal $b_1=1$. Only one geodesic,
denoted $\gamma_t$, lying in $\widetilde{H}_0$, cuts orthogonally $\widetilde{H}_1$;
it emerges from $b_0=0$ and ends at $b_1=1$.
All the geodesics $\gamma_>$, emerging from $b_0=0$ and lying in $\widetilde{H}_0$
above $\gamma_t$, cut orthogonally horocycles $\widetilde{H}_b$ with $b>1$;
the geodesics $\gamma_<$, emerging from $b_0=0$ and lying in $\widetilde{H}_0$
below $\gamma_t$, cut orthogonally horocycles $\widetilde{H}_b$ with $b<1$.
However, the density of the flow of geodesics entering orthogonally each horocycle
equals the density of the flow of geodesics exiting orthogonally
the same horocycle.

\section{Transport equation and distribution of the density of trajectories}
\label{se:focusing}
\subsection{Transport equation in the Beltrami pseudosphere}
\label{subse:local}
Working out the problem in the space $\HH^2$ allows us to describe each
trajectory as a geodesic in the Poincar\'{e} plane (or disk), but this setting is
not appropriate for describing the evolution of a bunch of trajectories.
Hereafter we will switch to a representation more suitable for an effective
characterization of the amplitude factor in the geometrical approximation of the field.
To this aim, let us first recall the following well--known negative result due to Hilbert:
there is no regular smooth immersion $X:\HH^2 \rightarrow \R^3$. However,
one can look for a local immersion $X: \cU \rightarrow \R^3$,
where $X$ is a continuous differentiable function, and $\cU\subset\HH^2$
is an open subset. We keep for $\cU$ an open horocycle based at $b$.
This local immersion can be
realized by means of the Beltrami pseudosphere, denoted hereafter by $P_b$
(see the Appendix and Fig. \ref{fig_2}). In fact, let us consider in the hyperbolic
disk $D$ an infinite strip lying between two parallel straight lines
emerging from the source point located on the absolute at $\zeta=-\rmi$.
Then we take on these parallel geodesics a pair of points $A_0$ and $B_0$,
lying on a horocycle of normal $b_0=e^{-\rmi\pi/2}=-\rmi$ and
cutting orthogonally these straight lines; $A_0$ and $B_0$ are spaced at
distance of $2\pi$. One is then led to consider the domain $(-\rmi,A_0,B_0)$.
The Beltrami surface cut along any of its generators can be isometrically
mapped into the domain $(-\rmi,A_0,B_0)$ (see Ref. \onlinecite{Mishchenko}).
On a Lobachevskian plane there always exists reflection (i.e., a hyperbolic
isometry) about an arbitrary straight line; in particular, reflecting the
strip $(-\rmi,A_0,B_0)$ about the straight line $(-\rmi,A_0)$ we obtain a
new strip isometric to the initial one and realized as a cut of the Beltrami
surface in $\R^3$. Reflecting then this new strip $(-\rmi,A_1,A_0)$ (the
segment $A_1A_0$ has length $2\pi$) about the straight line $(-\rmi,A_1)$
we obtain the strip $(-\rmi,A_2,A_1)$ with the same properties.
Exactly the same procedure can be repeated on the other side of
$(-\rmi,A_0,B_0)$, leading to $(-\rmi,B_2,B_1)$. We thus obtain strips of
the form $(-\rmi,A_k,A_{k-1})$ and $(-\rmi,B_k,B_{k-1})$
$(1\leqslant k < \infty)$; all segments $(A_k,A_{k-1})$ and $(B_k,B_{k-1})$
have the same length $2\pi$. Working with the same procedure we can now
construct the map of the open
horocycle $H_{b_0}$, tangent at the boundary to the forbidden region (this latter
represented by the horocycle $H_\rmi$ of normal $\rmi$ and passing
through the origin), into a Beltrami funnel, such that each strip of
the type $(-\rmi,A_k,A_{k-1})$, $(-\rmi,A_0,B_0)$, and $(-\rmi,B_k,B_{k-1})$,
$(1\leqslant k < \infty)$ (referred, now, to the horocycle $H_{b_0}$), is mapped
isometrically into the Beltrami
surface, the horocycle $H_{b_0}$ being wound infinitely many times into
the Beltrami surface \cite{Mishchenko} (see Fig. \ref{fig_2}).
We can repeat the same procedure for each point $b \in B$, since there
is a rotation (i.e., a hyperbolic isometry) sending each $b \in B$ onto $b_0$.\\
For an explicit equation of the immersion $X$, the reader is referred to Ref. \onlinecite{Treibergs}.

\begin{figure}[tb]
\begin{center}
\leavevmode
\includegraphics[width=5cm]{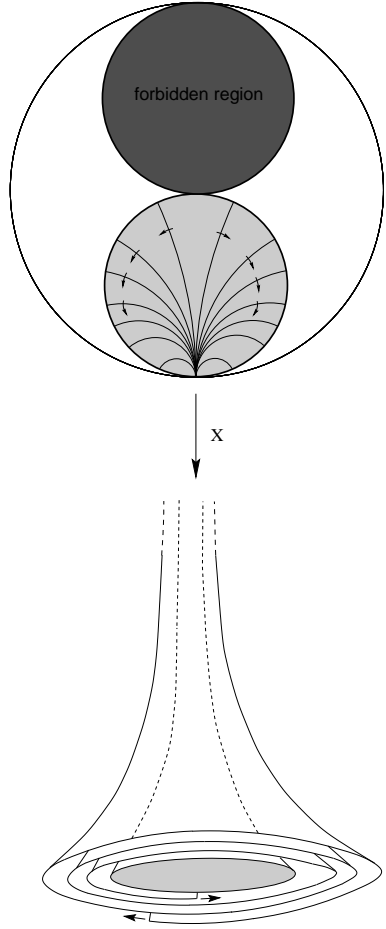}
\caption{\label{fig_2} Mapping of a horocycle in the disk $D$ into a Beltrami pseudosphere.}
\end{center}
\end{figure}

~

In general, the Laplace--Beltrami operator $\Delta_M$ on a two--dimensional
Riemannian manifold $M$ with metric tensor $g_{ij}$
($g=|\mbox{det}(g_{ij})|$, $g^{ij}=g^{-1}_{ij}$) is defined as follows:
\beq
\label{trentadue}
\Delta_M =
\frac{1}{\sqrt{g}}\left[\sum_{i=1}^2\frac{\partial}{\partial x_i}
\left(\sum_{j=1}^2 g^{ij}\sqrt{g}\frac{\partial}{\partial x_j}\right)\right].
\eeq
In the specific case of the hyperbolic metric associated with
the refractive index $n(y)=1/y$ (see the Appendix), the Laplace--Beltrami operator reads:
\beq
\label{laplacian}
\Delta_\HH=\frac{1}{n^2}\left(\frac{\partial^2}{\partial x^2}+
\frac{\partial^2}{\partial y^2}\right)=
y^2\left(\frac{\partial^2}{\partial x^2}+\frac{\partial^2}{\partial y^2}\right).
\eeq
We then have the following proposition.
\begin{proposition}
\label{pro:4}
(i) The Helmholtz equation reads
\beq
\label{ventinove}
\Delta_\HH \psi + k_\HH^2\psi=0,
\eeq
where $k_\HH^2=\lambda^2+\frac{1}{4}$ $(\lambda\in\R)$.\\
(ii) The geometrical approximation of the wave function $\psi$ (for $|\lambda|\to\infty$),
written in terms of the Beltrami coordinates (see the Appendix), reads
\beq
\label{trenta}
\psi_\pm(\lambda,u)=C(\lambda)\,e^{u/2}\,e^{\mp\rmi\lambda u}~~~~~(\lambda\in\R; u\geqslant 0).
\eeq
\end{proposition}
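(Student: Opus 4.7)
\smallspace

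\noindent The plan is to treat the two claims in turn, leveraging the horocyclic--wave structure already in hand.

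For (i), the strategy is to identify the wave function of Proposition \ref{pro:3} as an eigenfunction of the Laplace--Beltrami operator and then read off the eigenvalue. By Proposition \ref{pro:3} one has $\psi(\zeta,\lambda,b)=A(\lambda)\,[P(\zeta,b)]^{1/2-\rmi\lambda}$, and by Proposition \ref{pro:2}(ii) the Poisson power satisfies $\Delta_D P^\nu(\zeta,b)=\nu(\nu-1)P^\nu(\zeta,b)$ for every $\nu\in\C$. A direct computation with $\nu=\frac{1}{2}-\rmi\lambda$ gives
\beq
\nu(\nu-1)=\left(\tfrac{1}{2}-\rmi\lambda\right)\left(-\tfrac{1}{2}-\rmi\lambda\right)=-\left(\lambda^2+\tfrac{1}{4}\right).
\eeq
Since the conformal map $\sigma$ of Proposition \ref{pro:3} identifies $\Delta_D$ with $\Delta_\HH$, Eq.~(\ref{ventinove}) follows at once with $k_\HH^2=\lambda^2+\frac{1}{4}$.

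For (ii), the key is to rewrite $\Delta_\HH$ in the Beltrami coordinates $(u,v)$ introduced in the Appendix. In these coordinates the induced metric on the pseudosphere takes the horocyclic form $\rmd s^2=\rmd u^2+e^{-2u}\rmd v^2$, so that (\ref{trentadue}), with $\sqrt{g}=e^{-u}$, yields
\beq
\Delta_\HH=\frac{\partial^2}{\partial u^2}-\frac{\partial}{\partial u}+e^{2u}\frac{\partial^2}{\partial v^2}.
\eeq
The next step is geometric: each horocycle $\widetilde{H}_b$ lifts, through the construction recalled above, to a parallel circle $\{u=\Cs\}$ of $P_b$, while the geodesics orthogonal to it lift to the meridians $\{v=\Cs\}$. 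Since the eikonal equation (\ref{otto}) forces the phase to grow linearly along rays---i.e., in the $u$ direction---the short--wavelength regime $|\lambda|\to\infty$ selects $v$--independent solutions of (\ref{ventinove}). Substituting $\psi=\psi(u)$ reduces the Helmholtz equation to the second--order linear ODE
\beq
\psi''(u)-\psi'(u)+\left(\lambda^2+\tfrac{1}{4}\right)\psi(u)=0,
\eeq
whose characteristic roots are $\mu_\pm=\frac{1}{2}\pm\rmi\lambda$. The two fundamental solutions are therefore $\psi_\pm(\lambda,u)=C(\lambda)\,e^{u/2}\,e^{\mp\rmi\lambda u}$, the twofold degeneracy mirroring statement (v) of Proposition \ref{pro:2} and corresponding to the two directions of propagation along each meridian.

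The principal subtlety I anticipate is justifying the reduction to $v$--independent wave functions on $P_b$. This rests on Proposition \ref{pro:3}: the amplitude $A(\lambda)$ carries no dependence on $\zeta$ or $b$, and the constant--phase curves of $\HH^2$ map exactly to the parallels of the pseudosphere, so after lifting the field collapses to a pure function of the meridional arc--length $u$. The amplification factor $e^{u/2}$ multiplying the oscillatory phase $e^{\mp\rmi\lambda u}$---the real part $\frac{1}{2}$ of the characteristic root---then reflects the nontrivial Jacobian of the horocyclic flow on $P_b$, and is precisely the mechanism that drives the ray focusing analyzed in the remainder of Sec.~\ref{se:focusing}.
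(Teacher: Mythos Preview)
Your proof is correct, and for part (i) it coincides with the paper's argument. For part (ii), however, you take a genuinely different and more elementary route. The paper follows the standard geometrical--optics procedure: it writes $\psi$ as an oscillatory integral, applies the stationary--phase method, and substitutes the leading term $C(\lambda)A(\bx)e^{(\frac{1}{2}-\rmi\lambda)\Phi(\bx)}$ back into the Helmholtz equation; collecting powers of $\rmi\lambda$ then yields the eikonal equation $(\rmd\Phi/\rmd u)^2=1$ (whence $\Phi^{(\pm)}=\pm u+c$) and the transport equation $\frac{\rmd}{\rmd u}\bigl(A^2 e^{\pm u}e^{-u}\,\rmd\Phi^{(\pm)}/\rmd u\bigr)=0$, from which the amplitude is read off. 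You instead compute $\Delta_\cB$ explicitly in Beltrami coordinates, reduce to the ODE $\psi''-\psi'+(\lambda^2+\tfrac14)\psi=0$, and solve it exactly via the characteristic roots $\mu_\pm=\tfrac12\pm\rmi\lambda$. Your approach is shorter and has the bonus of showing that the ``geometrical approximation'' is in fact an exact solution here (the asymptotic series terminates after one term); the paper's approach, on the other hand, makes the eikonal/transport split explicit, which is thematically central and feeds directly into the subsequent discussion of the density of trajectories. One minor point: what you call $\Delta_\HH$ in Beltrami coordinates is what the paper denotes $\Delta_\cB$, reserving $\Delta_\HH$ for the half--plane form; the two are of course intertwined by the local isometry, so no harm is done.
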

\begin{proof}
(i) Let us consider the {\it horocyclic waves} which generate the
conical functions $\cP_{-\frac{1}{2}\pm\rmi\lambda}(\cosh r)$,
corresponding to the irreducible unitary representation of the
$SU(1,1)$ group, which acts transitively on the hyperbolic disk $D$.
This amounts to put in the exponent $\nu\in\C$ of the Poisson kernel:
$\nu=\frac{1}{2}\pm\rmi\lambda$ $(\lambda\in\R)$. Accordingly,
the {\it horocyclic waves} read
$e^{(\frac{1}{2}\pm\rmi\lambda)\langle\zeta,b\rangle}$ [see statements (iv) and (v)
of Proposition \ref{pro:2}]. From statement (ii) of Proposition \ref{pro:2}
and Eq. (\ref{ventitre}) we get:
\beq
\label{trentuno}
\Delta_\HH \,e^{(\frac{1}{2}\pm\rmi\lambda)\langle\zeta,b\rangle}=
-\left(\lambda^2+\frac{1}{4}\right) e^{(\frac{1}{2}\pm\rmi\lambda)\langle\zeta,b\rangle}
=-k_\HH^2\,e^{(\frac{1}{2}\pm\rmi\lambda)\langle\zeta,b\rangle},
\eeq
where $k_\HH^2=\lambda^2+\frac{1}{4}$ $(\lambda\in\R)$. Next, proceeding
in close analogy with the Euclidean case, where the Euclidean plane wave
plays the role of the {\it horocyclic wave}, we obtain Eq. (\ref{ventinove}). \\
(ii) Let us now go back to the mapping of the horocycle into the Beltrami
funnel (without a cut) in $\R^3$, illustrated above. Next, we apply the
Laplace--Beltrami operator to the wave function $\psi$, supposed to belong
to $C^\infty(\cB)$ ($\cB$ denoting the Beltrami pseudosphere);
in (\ref{trentadue}) $x_i$ ($i=1,2$) stand for the Beltrami coordinates
$u,v$. Recall that the first fundamental form in Beltrami coordinates
reads [see part (C) of the Appendix]:
\beq
\label{beltr}
I=\rmd u^2+e^{-2u}\rmd v^2~~~~~(u\geqslant 0).
\eeq
Accordingly, we have $g_{11}=1$, $g_{22}=e^{-2u}$, $g_{12}=g_{21}=0$,
$g=|\det(g_{ij})|=e^{-2u}$, $g^{ij}=g_{ij}^{-1}$.
Thus, we are led to the following equation:
\beq
\label{trentatre}
\Delta_\cB\psi+k_\HH^2\psi=0,
\eeq
where $\Delta_\cB$ is the Laplace--Beltrami operator, referred to the
Beltrami pseudosphere. In this equation, we pass from the coordinates
$(\xi, \eta)$ of the hyperbolic disk $D$ to the Beltrami coordinates $(u,v)$
of the Beltrami pseudosphere. We illustrate with more details this passage.
First we embed an open horocycle $H_b$ of normal $b$ and tangent to the
{\it forbidden region} (represented by the horocycle $H_{\rmi}$ passing
through the origin of $D$ and with normal $\rmi$; see Fig. \ref{fig_1}) into a
Beltrami pseudosphere.
Notice that in the present analysis, as well as in Proposition \ref{pro:3},
and in strict analogy with the classical Euclidean procedure,
we consider the distance from the origin of the hyperbolic disk $D$
(rather than from the point source located at $\zeta=\rmi$) to
the horocycle $H_{\zeta,b}$ (inside $H_b$) of normal $b$ passing through a point $\zeta$.
Thus we have
\beq
\langle\zeta,b\rangle=d(0,H_b)+d(H_b,H_{\zeta,b}):=d_b+d(H_b,H_{\zeta,b}).
\eeq
When we embed $H_b$ into a Beltrami pseudosphere, the distance
$d(H_b,H_{\zeta,b})$ between horocycles corresponds to the distance
between different parallels $u=\Cs$ inside the pseudosphere.
Since $H_b$ is fixed, then $d_b$ is fixed too. Then, following
the standard method of stationary phase, we look now for a solution
of equation (\ref{trentatre}), of the following form:
\begin{eqnarray}
\label{trentaquattro}
\psi(\lambda,\bx) &=& \int A(\bx,\ell)\,e^{(\frac{1}{2}-\rmi\lambda)\langle\zeta,b\rangle}\,\rmd\ell
=e^{(\frac{1}{2}-\rmi\lambda)d_b}\int A(\bx,\ell)\,e^{(\frac{1}{2}-\rmi\lambda)d(H_b,H_{\zeta,b})}\,\rmd\ell \nonumber \\
&=& C(\lambda)\int A(\bx,\ell)\,e^{(\frac{1}{2}-\rmi\lambda)\Phi(\bx,\ell)}\,\rmd\ell,
\end{eqnarray}
where $\bx=(x_1,x_2)$, $x_1=u$, $x_2=v$; $\ell$ is the pathlength inside the
pseudosphere to the point of coordinate $\bx$, $\Phi(\bx,\ell)$ denotes the
phase (recall the statements of Proposition \ref{pro:2}).

The r.h.s. of equality (\ref{trentaquattro}) is an integral of oscillating type.
The principal contribution to $\psi(\lambda,\bx)$, as $|\lambda|\rightarrow +\infty$,
corresponds to the stationary point of $\Phi$, in the neighborhood of which
the exponential ceases to oscillate rapidly. These stationary points can be
obtained from the equation $\partial\Phi/\partial \ell=0$ (provided that
$\partial^2\Phi/\partial \ell^2 \neq 0$). If the condition
$\partial\Phi/\partial \ell=0$ is satisfied by a unique value
$\ell_0$ of $\ell$, corresponding to the unique ray trajectory (geodesic) passing
across the point of coordinates $(u,v)$, we say that $\Phi$ has a critical
nondegenerate point at $\ell=\ell_0$. Moreover, recalling that the manifolds
with nonpositive curvature do not have conjugate points, we can state that
all the critical points of $\Phi$ are nondegenerate. Then, by applying the
Morse lemma on the representation of the functions all of whose critical
points are nondegenerate, we obtain the following
asymptotic evaluation of integral (\ref{trentaquattro}):
\beq
\label{trentacinque}
\psi(\lambda,\bx) = C(\lambda)e^{\frac{1}{2}\Phi(\bx,\ell_0)}\,e^{-\rmi\lambda\Phi(\bx,\ell_0)}
\sum_{m=0}^\infty \frac{A_m(\bx)}{(\rmi\lambda)^m}.
\eeq
The leading term of expansion (\ref{trentacinque}) reads
\beq
\label{trentasei}
\psi(\lambda,\bx) = C(\lambda)A_0(\bx) e^{\frac{1}{2}\Phi(\bx,\ell_0)} e^{-\rmi\lambda\Phi(\bx,\ell_0)},
\eeq
where
\beq
\label{trentasette}
A_0(\bx)=A(\bx,\ell_0)\left(\left|\frac{\partial^2\Phi}{\partial \ell^2}\right|^{-1/2}\right)_{\ell=\ell_0}
\exp\left[\rmi\,\frac{\pi}{4}\,\Sg\left(\frac{\partial^2\Phi}{\partial \ell^2}\right)\right]_{\ell=\ell_0}.
\eeq
For simplicity, in the following we shall write the leading term of expansion
(\ref{trentacinque}) as: $C(\lambda)A(\bx)e^{(\frac{1}{2}-\rmi\lambda)\Phi(\bx)}$,
dropping the zero subscripts. Substituting this expression into Eq. (\ref{trentatre}),
collecting powers of $(\rmi\lambda)$ and, finally, equating to zero their coefficients,
two equations are obtained: the eikonal (or Hamilton--Jacobi) equation
\beq
\label{trentotto}
g^{ij}\frac{\partial\Phi}{\partial x_i} \frac{\partial\Phi}{\partial x_j} = 1,
\eeq
and the transport equation
\beq
\label{trentanove}
\frac{1}{\sqrt{g}}\sum_{i=1}^2\frac{\partial}{\partial x_i}\left[\sqrt{g}A^2\,e^\Phi\sum_{j=1}^2 g^{ij}
\frac{\partial\Phi}{\partial x_j}\right]=0.
\eeq
Let us note that in the present problem the wave functions are radial,
in view of the fact that we are considering a family of horocycles having
all the same normal $b$. Therefore, $\psi(\lambda,\bx)$ [where $\bx\equiv(u,v)$,
$u,v$ being the Beltrami coordinates, see the Appendix] does not depend on $v$.
Then, Eq. (\ref{trentotto}) becomes:
\beq
\label{quaranta}
\left(\frac{\rmd\Phi}{\rmd u}\right)^2 = 1,
\eeq
which gives the following expression of the
phase: $\Phi^{(\pm)}=\pm u + c$ ($c=\Cs$). Proceeding analogously with Eq. (\ref{trentanove}),
we have:
\beq
\label{quarantuno}
\frac{\rmd}{\rmd u}\left(A^2\,e^{\pm u}\,e^{-u}\frac{d\Phi^{(\pm)}}{du}\right) = 0.
\eeq
Substituting in the leading term (\ref{trentasei}) the expressions of
$\Phi$ and $A$, which derive from (\ref{quaranta}) and (\ref{quarantuno}),
the rhs of (\ref{trenta}) follows.
\end{proof}

\subsection{Sine--Gordon equation and the flow of trajectories}
\label{subse:sine}
The analysis of propagation in the Beltrami pseudosphere has allowed us to study the
distribution of the density of trajectories and, accordingly, the ray focusing along the horizontal axis.
Another parameter, related once again to the Beltrami pseudosphere, and whose characterization
is relevant in our description of the flow of trajectories,
is the angle $\varphi$ that the tangent to the meridian (in the Beltrami
pseudosphere) makes with the $z$ axis [see section (C) of the Appendix].

In the Appendix, the Beltrami pseudosphere is described in terms of
the Beltrami coordinates $u$ and $v$. Here we choose another parameterization $(p,q)$ by setting
\begin{subequations}
\begin{eqnarray}
\label{cambio_insieme}
\rmd p&=&-\csc{\varphi}\,\rmd\varphi, \label{cambio} \\
\rmd q&=&-\rmd v \label{cambio02}.
\end{eqnarray}
\end{subequations}
Integrating (\ref{cambio}), we obtain:
\beq
\label{solution2}
\varphi=2\tan^{-1}{\left(e^{-p}\right)},
\eeq
and
\beq
\label{calcolo}
\frac{\rmd\varphi}{\rmd p}=-\sin{\varphi}=\frac{1}{\cosh p}.
\eeq
Next, substituting in formula (\ref{rotazione2}) of the Appendix (with $\rho=-1$)
$\varphi$ and $v$ in terms of the parameters $p$ and $q$, the position vector
$\br$ of the pseudosphere can be rewritten as follows:
\beq
\label{rotazionepq}
\br(p,q)=\left(
\begin{array}{ccc}
-\frac{1}{\cosh p}\cos q \\
\frac{1}{\cosh p}\sin q \\
p-\tanh{p}
\end{array}
\right),
\eeq
where the downward vertex of the pseudosphere corresponds to $p\rightarrow -\infty$,
while the rim corresponds to $p=0$. The parameters $(p,q)$ can be related to
the arc lengths $(\alpha, \beta)$ along asymptotic lines  as follows:
\beq
\label{cambio2}
\begin{array}{ll}
p=\alpha+\beta, \\
q=\alpha-\beta.
\end{array}
\eeq
Setting $\omega=2\varphi$, the first fundamental form reads
\beq
\label{formarotazione6}
I=\rmd\alpha^2+2\cos{\omega}\rmd\alpha\,\rmd\beta+\rmd\beta^2,
\eeq
which can be easily derived from the expression (\ref{formarotazione3}) of
the Appendix through the formulae (\ref{cambio}) and (\ref{cambio2})
(recalling that $\rho^2=1$). Moreover, from Gauss' and Weingarten's equations,
it follows that $\omega$, which is the angle between the asymptotic lines,
satisfies the {\it classical} sine--Gordon equation, which reads
\beq
\label{sinegordon3}
\frac{\partial^2\omega}{\partial\alpha\,\partial\beta}=\sin{\omega}.
\eeq
This latter equation rewritten in terms of the coordinates $p,q$ becomes
\beq
\label{sinegordon2}
\frac{\partial^2 \varphi}{\partial p^2}-\frac{\partial^2 \varphi}{\partial q^2}=\sin{\varphi}\cos{\varphi}.
\eeq
We can finally state the following proposition:
\begin{proposition}
\label{pro:5}
(i) The angle $\varphi$, that the tangent to the meridian of the pseudosphere
makes with the $z$ axis (i.e., the rotation axis), is represented by the
following formula:
\beq
\label{solution}
\varphi=2\tan^{-1}{\left(e^{-p}\right)},
\eeq
which is the so--called ``one--soliton" solution of the sine--Gordon
equation (\ref{sinegordon2}). \\
(ii) The angle $\varphi$ varies from $\varphi=\frac{\pi}{2}$ to $\varphi=\pi$
in the process of focusing, and from $\varphi=\pi$ to
$\varphi=\frac{\pi}{2}$ in the process of defocusing.
\end{proposition}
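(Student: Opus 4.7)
For part (i), I would first recover the explicit formula for $\varphi$ by integrating the defining differential equation (\ref{cambio}). Writing $dp=-d\varphi/\sin\varphi$ and using the antiderivative $\int d\varphi/\sin\varphi=\log|\tan(\varphi/2)|$, one obtains $\log\tan(\varphi/2)=-p$ (the integration constant is fixed by the normalization $\varphi=\pi/2$ at the rim $p=0$), which yields precisely (\ref{solution}). The next step is to verify that this $\varphi(p)$ is a soliton solution of (\ref{sinegordon2}). Since $\varphi$ depends only on $p$, the sine--Gordon equation collapses to the ODE $\varphi''(p)=\sin\varphi\cos\varphi$. Differentiating the already--established identity $d\varphi/dp=-\sin\varphi$ from (\ref{calcolo}) gives $d^2\varphi/dp^2=-\cos\varphi\,(d\varphi/dp)=\sin\varphi\cos\varphi$, confirming that (\ref{solution}) satisfies (\ref{sinegordon2}). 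The soliton character then follows from the standard observation that in travelling--wave coordinates $\alpha,\beta$ the profile (\ref{solution}) is the classical ``kink'' $4\tan^{-1}(e^{-p})$ rescaled by $\omega=2\varphi$, which is the prototypical one--soliton of (\ref{sinegordon3}).

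For part (ii), I would simply analyze the monotonicity and boundary values of the function $\varphi(p)=2\tan^{-1}(e^{-p})$ and match them to the geometric interpretation of the parameter $p$ given after (\ref{rotazionepq}). At the rim $p=0$ one has $\tan^{-1}(1)=\pi/4$, hence $\varphi=\pi/2$: the tangent to the meridian is orthogonal to the rotation axis, corresponding to the geometric situation where rays enter the pseudosphere transversally and are maximally spread. As $p\to-\infty$ (the downward vertex, that is, the cusp along the rotation axis), $e^{-p}\to+\infty$, so $\tan^{-1}(e^{-p})\to\pi/2$ and $\varphi\to\pi$: the tangent to the meridian becomes aligned with the axis, which is the limit where the ray bundle collapses onto the $z$--axis. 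Since $d\varphi/dp=-\sin\varphi<0$ throughout $(-\infty,0]$, the function $\varphi(p)$ is strictly monotone, so the passage from the rim to the vertex (focusing along the pseudosphere axis, i.e.\ along the boundary of $\HH^2$) is described by $\varphi:\pi/2\to\pi$, while the reverse passage (defocusing, when the bunch of rays flares out towards the rim) is described by $\varphi:\pi\to\pi/2$.

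The steps above are essentially computational, and I do not anticipate a serious obstacle. The one point that requires care is the geometric identification of ``focusing'' with the limit $p\to-\infty$ and of ``defocusing'' with the limit $p\to 0$: this relies on the earlier identification of the vertex of the Beltrami funnel with the horizontal $x$--axis of $\HH^2$ (where the transport--equation analysis of Proposition \ref{pro:4} has already shown that the amplitude grows as $e^{u/2}$), and on the fact that the horocycle $H_b$ is wound around the funnel with decreasing Euclidean radius as $p\to-\infty$. Once this physical picture is matched to the analytic behaviour of $\varphi(p)$, both assertions of the proposition follow immediately.
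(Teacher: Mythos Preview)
Your proposal is correct and follows essentially the same approach as the paper's own proof, which is extremely terse: for (i) the paper simply says the result ``follows easily by direct calculation,'' and for (ii) it just reads off the values of $\varphi$ at $p=0$ and $p\to-\infty$ from formula (\ref{solution}). You have supplied exactly the direct calculation the paper omits---integrating (\ref{cambio}) to recover (\ref{solution}) and then differentiating $d\varphi/dp=-\sin\varphi$ to verify (\ref{sinegordon2})---and your limit analysis for (ii) matches the paper's reasoning precisely.
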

\begin{proof}
(i) The proof of this statement follows easily by direct calculation. \\
(ii) The downward vertical $z$ axis of the pseudosphere we are considering
is negatively oriented; then varying $p$ from $0$ to $-\infty$
[in equation (\ref{solution})], $\varphi$ varies from $\varphi=\frac{\pi}{2}$
to $\varphi=\pi$ (focusing). Next varying $p$ from $-\infty$ to $0$,
$\varphi$ varies from $\varphi=\pi$ to $\varphi=\frac{\pi}{2}$ (defocusing).
\end{proof}

So far in this section we have considered a local description of the flow
in order to show that the density of the flow is not homogeneously distributed.
We now want to recover a global description in the entire strip $0<y\leqslant 1$.
But we have already remarked that the map of a horocycle into a pseudosphere
excludes the boundary of the horocycle. Therefore this
global description cannot be reached by considering only horocycles lying in the strip $0<y\leqslant 1$
and tangent to the line $y=1$. Recall, however, that while the \emph{physical}
geodesics, which lie within the strip $0<y\leqslant 1$ in $\HH^2$,
cannot enter the forbidden region $y>1$, this is not the case for horocycles,
provided we limit ourselves to consider in these horocycles
(i.e., those entering the forbidden region) those segments of geodesics which lie in the strip $0<y\leqslant 1$.
In view of these considerations it is sufficient to consider horocycles
$\widetilde{H}'_b \supset \widetilde{H}_b$ ($\widetilde{H}'_b$ entering the forbidden region),
and, accordingly, the maps $X'_b$ embedding $\widetilde{H}'_b$ into a Beltrami
pseudosphere $P'_b$. We can thus obtain a sequence of Beltrami pseudospheres
$P'_b$ which, by varying $b$, allows us to connect in $\R^3$ the geodesics lying in the
strip $0<y\leqslant 1$. We thus pass from a local to a global description.

\section{Conclusions and discussion}
\label{se:conclusions}
In this paper we have presented the geometrical optics generated by a refractive
index of hyperbolic type.
The ray trajectories are geodesics in the Poincar\'e--Lobachevsky half--plane,
and the horocyclic waves, which are related to the Poisson kernel, represent the
analogs of the Euclidean plane waves.
We thus obtain two main results:
\begin{itemize}
\item[(a)] The flow in the entire strip $0<y\leqslant 1$ is conserved (see Proposition \ref{pro:3});
\item[(b)] inside each horocycle (by embedding horocycles in Beltrami pseudospheres)
the ray focusing on each point $b$ of the horizontal $x$ axis: i.e., toward
the boundary of $\HH^2$, is shown.
\end{itemize}

The connection between hyperbolic geometry and optics of spatially
nonuniform media can be made even tighter.
In fact, it can be shown that the transfer matrix associated
with lossless layered optical media is an element of the group $SU(1,1)$,
no matter how complicated the stepwise profile of the refractive index
might be \cite{Barriuso,Monzon,Yonte}. Therefore, the action of any
lossless optical multilayer can be regarded as a M\"{o}bius transformation
on the unit disk \cite{Barriuso}, and therefore the natural geometric environment
for these physical systems is the hyperbolic one.
From this point of view, the geometrical optics description of light propagation
in the ``hyperbolic glass'' discussed so far can be regarded as
the study (in a spatially continuum setting) of the particular case of special interest,
in which only motions along hyperbolic geodesics are allowed.

\section*{Appendix}
\label{se:appendix}
\setcounter{equation}{0}
\renewcommand{\theequation}{A.\arabic{equation}}
\noindent
{\bf (A)} Let us consider the upper half--plane model of the hyperbolic
two--dimensional space $U=\{z=x+\rmi y \,:\, y>0\}$.
Then the boundary $\partial U$ of $U$ is the real axis and infinity.
On $U$ we can define a metric $d$ derived from the differential
$\rmd s=|dz|/\Imag{z}$ where $\rmd z$ is the standard Euclidean metric.
The geodesics of $U$ are vertical half--lines and Euclidean semicircles with
center on the real axis. The group of the orientation preserving isometries of
$(U,d)$ is the M\"{o}bius group $PSL_2(\R)$, that is the group of the $2\times 2$
matrices of real coefficients with determinant $1$. The action on
$U\cup\partial U$ is defined as $\gamma(z)=(az+b)/(cz+d)$, where
{
\arraycolsep=2pt
\renewcommand{\arraystretch}{0.7}
$\left(
\begin{array}{cc}
a & b \\
c & d
\end{array}
\right)$
}
$\in PSL_2(\R)$.

Another model of the hyperbolic two--dimensional space is the Poincar\'{e}
disk $D=\{\zeta=\xi+\rmi\eta \,:\, \xi^2+\eta^2 <1\}$. The map
\beq
\label{map_U_D}
\zeta=\rmi \frac{z-\rmi}{z+\rmi},
\eeq
\{with inverse $z= -\rmi[(\zeta+\rmi)/(\zeta-\rmi)]$\} transfers the geometry
of $U$ into the geometry of $D$. In particular, the metric on $D$ is given by
the differential
\beq
\label{diff1}
\rmd\zeta=\frac{2}{1-|\zeta|^2}\rmd_E\zeta,
\eeq
where $\rmd_E$ is the standard Euclidean metric. In this model, the geodesics are
circular arcs perpendicular to the boundary $|\zeta|=1$. Moreover, the group
of the orientation preserving isometries of the Poincar\'e disk $D$ is the
group $SU(1,1)$ of the maps of the following form:
\beq
\label{mappi}
\left\{\frac{a\zeta+\overline{c}}{c\zeta+\overline{a}} \,:\, |a|^2-|c|^2=1 \right\}.
\eeq

~

\noindent
{\bf (B)}
The group $G=SU(1,1)$ admits two subgroups relevant for our analysis:
\begin{enumerate}
\item[(1)]
The subgroup $K$ of rotations
\begin{equation*}
k_{\theta}=
\left(
\begin{array}{cc}
e^{\rmi\frac{\theta}{2}} & 0 \\
0 & e^{-\rmi \frac{\theta}{2}}
\end{array}
\right)
\qquad
(0 \leqslant \theta < 4\pi).
\end{equation*}
\item[(2)] The subgroup $A$ of matrices
\begin{equation*}
a_r=
\left(
\begin{array}{cc}
\cosh{\frac{r}{2}} & \sinh{\frac{r}{2}} \\
\sinh{\frac{r}{2}} & \cosh{\frac{r}{2}}
\end{array}
\right)
\qquad (r \in \R).
\end{equation*}
\end{enumerate}
Let $A^+$ denote the set $a_r$ with $r\geqslant 0$. Then the following decomposition holds.

\noindent
{\bf Cartan decomposition \cite{Helgason1}}:
Any element $g\in G$ can be decomposed as follows:
\beq
\label{novantotto}
g = k_\theta a_r k_\phi~~~~~(r\geqslant 0,\,0\leqslant\theta< 4\pi,\,0\leqslant\phi < 2\pi),
\eeq
i.e., $G = K A^+ K$. The decomposition is unique if $g \not\in K$.

We now introduce the so--called {\it spherical functions} $\Phi_\nu(g)$ on
$G/K$ [$g\in G=SU(1,1),\,\nu\in\C$], which are defined as follows \cite{Eymard}.
\begin{definition}
\label{def:1}
The spherical functions on $G/K$ $[G=SU(1,1),\,K=SO(2)]$ are defined as follows:
\beq
\label{cento}
\Phi_\nu(g)=\int_B \left|\frac{\rmd(g^{-1}\cdot b)}{\rmd b}\right|^\nu\,\rmd b~~~~~[g\in SU(1,1),\,\nu\in\C],
\eeq
where $B$ is the boundary of the hyperbolic disk $D$ (i.e., $B=\{\zeta \,:\, |\zeta| = 1\}$).
\end{definition}
We can then prove the following proposition.
\begin{proposition}
\label{pro:6}
The functions $\Phi_\nu(g)$ satisfy the following properties:\\
(i)
\beq
\label{centouno}
\Phi_\nu(g)= \cP_{-\nu}(\cosh r),~~~~~[g\in SU(1,1),\,\nu\in\C],
\eeq
where $\cP_{-\nu}(\cdot)$ are the first kind Legendre functions. \\
(ii)
\beq
\label{centodue}
\Delta_D \Phi_\nu(g) = \nu(\nu-1)\Phi_\nu(g),~~~~~[g\in SU(1,1),\,\nu\in\C],
\eeq
where $\Delta_D$ is the hyperbolic Laplace--Beltrami operator. \\
(iii)
\beq
\label{app8}
\cP_{-\frac{1}{2}+\rmi\lambda}(\cosh r)=
\cP_{-\frac{1}{2}-\rmi\lambda}(\cosh r)~~~~(\lambda\in\R).
\eeq

\end{proposition}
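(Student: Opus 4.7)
The plan is to treat the three parts in order; (i) carries most of the content, after which (ii) and (iii) follow quickly.

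For (i), I first exploit the Cartan decomposition (\ref{novantotto}) together with the fact that the rotations $k\in K$ act on $B$ as isometries preserving $\rmd b$. A change of variable $b\mapsto k^{-1}b$ on the left, and the chain rule on the right, show that $\Phi_\nu$ is bi-$K$-invariant; hence it is enough to compute $\Phi_\nu(a_r)$. A direct calculation with the inverse of the matrix defining $a_r$ gives $(a_r^{-1})'(\zeta)=[\cosh(r/2)-\sinh(r/2)\zeta]^{-2}$; setting $\zeta=b=e^{\rmi\phi}$ and using the duplication identities $\cosh^2(r/2)+\sinh^2(r/2)=\cosh r$ and $2\sinh(r/2)\cosh(r/2)=\sinh r$, one finds $|(a_r^{-1})'(b)|^{-1}=\cosh r-\sinh r\cos\phi$. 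With $\rmd b=\rmd\phi/(2\pi)$ and the harmless shift $\phi\mapsto\phi+\pi$, the resulting integral becomes precisely the Laplace-type representation on the right-hand side of (\ref{ventisei}), so that $\Phi_\nu(a_r)=\cP_{-\nu}(\cosh r)$.

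For (ii), the essential observation is that the Radon--Nikodym factor $|(g^{-1})'(b)|$ is itself a Poisson kernel. Using the form $g(\zeta)=(a\zeta+\bar c)/(c\zeta+\bar a)$ from (\ref{mappi}), a short calculation yields $g(0)=\bar c/\bar a$, $1-|g(0)|^2=1/|a|^2$, $(g^{-1})'(b)=(a-cb)^{-2}$, and $|b-g(0)|^2=|\bar a b-\bar c|^2/|a|^2=|a-cb|^2/|a|^2$; combining these gives $|(g^{-1})'(b)|=P(g(0),b)$. Therefore $\Phi_\nu(g)=\int_B P(g(0),b)^\nu\,\rmd b$, viewed as a function on $D$ through the identification $g\mapsto g(0)$. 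Since $P(\zeta,b)^\nu$ is an eigenfunction of $\Delta_D$ with eigenvalue $\nu(\nu-1)$ by statement (ii) of Proposition \ref{pro:2}, differentiation under the integral sign concludes the proof.

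For (iii), I appeal to the classical symmetry $\cP_\nu=\cP_{-\nu-1}$ of the first-kind Legendre function. The coefficient $\nu(\nu+1)$ in the Legendre equation $(1-z^2)w''-2zw'+\nu(\nu+1)w=0$ is invariant under $\nu\mapsto-\nu-1$, while the regularity condition $\cP_\nu(1)=1$ picks out a unique solution; hence $\cP_\nu\equiv\cP_{-\nu-1}$. Applying this with $\nu=-\tfrac12+\rmi\lambda$ gives $-\nu-1=-\tfrac12-\rmi\lambda$ and the stated equality. The main obstacle is the bookkeeping in part (i): one must be consistent with the two slightly different conventions for the $SU(1,1)$ action appearing in the paper (formulas (\ref{ventisette}) and (\ref{mappi})) and push the complex arithmetic through so that the Jacobian of $a_r^{-1}$ on $B$ emerges in precisely the form required to match (\ref{ventisei}); once this identification is in place, parts (ii) and (iii) are short corollaries.
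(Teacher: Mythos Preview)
Your proof is correct and hews closely to the paper's own argument; the differences are organizational rather than substantive. For (i), you first invoke bi-$K$-invariance via the Cartan decomposition and then compute the Jacobian of $a_r^{-1}$ on $B$ directly, whereas the paper first identifies the Radon--Nikodym factor $|d(g^{-1}\!\cdot b)/db|$ with the Poisson kernel $P(g\!\cdot 0,b)$ for general $g$ and only afterwards specializes via $|\zeta|=\tanh(r/2)$; both routes land on the same Laplace integral (\ref{centonove}). Your treatment of (ii) is essentially identical to the paper's, since the Poisson-kernel identification you redo there is precisely equation (\ref{centosette}). For (iii), you appeal to the classical identity $\cP_\nu=\cP_{-\nu-1}$ via the Legendre ODE and the normalization $\cP_\nu(1)=1$, while the paper phrases the same uniqueness argument on the group side: $\Phi_\nu$ and $\Phi_{1-\nu}$ are bi-$K$-invariant eigenfunctions with the same eigenvalue and value $1$ at the identity, hence coincide. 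These are equivalent viewpoints; your ODE phrasing is slightly more self-contained, while the paper's keeps the spherical-function interpretation in the foreground.
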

\begin{proof}
Let us consider the following integral $\frac{1}{2\pi}\int_0^{2\pi}f(e^{\rmi\phi})\,\rmd\phi$
[$f\in L^1(B)$], and evaluate how the Lebesgue measure $\frac{1}{2\pi}\rmd\phi$ changes when an element
$g \in SU(1,1)$ acts on $B$ (boundary of $D$). Recalling that the action of $g$ is
$g\cdot \zeta = (a\zeta+c)/(\bar{c}\zeta+\bar{a})$ ($\zeta\in D$) (see Proposition \ref{pro:2}), we have
$g\cdot e^{\rmi\phi}=e^{\rmi\chi}=(ae^{\rmi\phi}+c)/(\bar{c}e^{\rmi\phi}+\bar{a})$.
We thus have \cite{Eymard}:
\beq
\label{centotre}
\left|\frac{\rmd\phi}{\rmd\chi}\right|=\left|\bar{c}e^{\rmi\chi}-a\right|^{-2}=
|a|^{-2}\left|\frac{\bar{c}}{a}e^{\rmi\chi}-1\right|^{-2}=
\left(1-\frac{|c|^2}{|a|^2}\right) \left|\frac{\bar{c}}{a}e^{\rmi\chi}-1\right|^{-2},
\eeq
since $|a|^2-|c|^2=1$. Let us now note that $g\cdot 0 = c/\bar{a}$; on the other hand, in view of
the Cartan decomposition, we have:
\beq
\label{centoquattro}
g\cdot 0 = k_\theta \, a_r \cdot 0 = e^{\rmi\theta}\tanh\left(\frac{r}{2}\right)=|\zeta|e^{\rmi\theta}.
\eeq
Therefore $(1-|c|^2/|a|^2)=1-|\zeta|^2$, and
\beq
\label{centocinque}
\left|\frac{\bar{c}}{a}e^{\rmi\chi}-1\right|^{-2}=
\frac{1}{\left|1-|\zeta|e^{\rmi(\chi-\theta)}\right|^2}=
\frac{1}{1+|\zeta|^2-2|\zeta|\cos(\chi-\theta)}.
\eeq
We thus have
\beq
\label{centosei}
\frac{1}{2\pi}\int_0^{2\pi} f(g\cdot e^{\rmi\phi})\,\rmd\phi=
\frac{1}{2\pi}\int_0^{2\pi} f(e^{\rmi\chi})\left|\frac{\rmd\phi}{\rmd\chi}\right|\,\rmd\chi,
\eeq
and, in view of formulae (\ref{centotre}), (\ref{centoquattro}), (\ref{centocinque}), (\ref{centosei}), we get
\beq
\label{centosette}
P(g\cdot 0,b) = \frac{1-|\zeta|^2}{1+|\zeta|^2-2|\zeta|\cos(\chi-\theta)}=
\left|\frac{\rmd(g^{-1}\cdot b)}{\rmd b}\right|.
\eeq
Recalling Definition \ref{def:1} we can write
\beq
\label{centootto}
\Phi_\nu(g) = \int_B \left|\frac{\rmd(g^{-1}\cdot b)}{\rmd b}\right|^\nu\,\rmd b =
\frac{1}{2\pi}\int_0^{2\pi}\left(\frac{1-|\zeta|^2}{1+|\zeta|^2-2|\zeta|\cos\phi}\right)^\nu\,\rmd\phi.
\eeq
Finally, writing $\tanh(r/2)$ in place of $|\zeta|$ [see formula (\ref{centoquattro})], we obtain
\beq
\label{centonove}
\Phi_\nu(g) = \frac{1}{2\pi}\int_0^{2\pi}\frac{1}{(\cosh r + \sinh r \cos\phi)^\nu}\,\rmd\phi =
\cP_{-\nu}(\cosh r),
\eeq
where the last equality follows from the integral representation of the first kind Legendre functions \cite{Bateman}.
Formula (\ref{centonove}) proves statement (i).

From formula (\ref{centonove}) it follows that the spherical functions $\Phi_\nu(g)$ are bi--$k$--invariant;
indeed, using the Cartan decomposition, we have
\beq
\label{centodieci}
\Phi_\nu(g) = \Phi_\nu(k_\theta\, a_r\,k_\phi) = \Phi_\nu(a_r) = \cP_{-\nu}(\cosh r).
\eeq
Furthermore, we can prove that [see formulae (\ref{centosette}) and (\ref{centootto})]:
\beq
\label{centoundici}
\Phi_\nu(E)=\int_B\left[P(E\cdot 0,b)\right]^\nu\,\rmd b=1=\cP_{-\nu}(1).
\eeq
Let us now return to statement (ii) of Proposition \ref{pro:2}, and recall that $P^\nu(\zeta,b)$
is an eigenfunction of the hyperbolic Laplace--Beltrami operator with eigenvalue $\nu(\nu-1)$. We then
consider an integral of the following form:
\beq
\label{centododici}
\int_B\left[P(g\cdot 0,b)\right]^\nu\,\rmd b = \Phi_\nu(g).
\eeq
It can be proved that this integral superposition is still an eigenfunction of the
hyperbolic Laplace--Beltrami operator $\Delta_D$, having $\nu(\nu-1)$ as eigenvalue \cite{Eymard}. In particular,
it follows that $\Phi_\nu(g)$ and $\Phi_{1-\nu}(g)$ [$g\in SU(1,1)$] are both eigenfunctions of $\Delta_D$
with the same eigenvalue, and therefore they coincide. We can thus state that
$\cP_{-\frac{1}{2}+\rmi\lambda}(\cosh r) = \cP_{-\frac{1}{2}-\rmi\lambda}(\cosh r)$;
statements (ii) and (iii) are thus proved.

Finally, from the representation of the Poisson kernel in terms of {\it horocyclic waves},
we have
\begin{eqnarray}
\label{centotredici}
\lefteqn{\cP_{-\frac{1}{2}+\rmi\lambda}(\cosh r) =
\frac{1}{2\pi}\int_0^{2\pi}\left(\frac{1}{\cosh r + \sinh r \cos\phi}\right)^{1/2-\rmi\lambda}\rmd\phi} \nonumber \\
&&=\frac{1}{2\pi}\int_0^{2\pi}\!\!\left(\frac{1-|\zeta|^2}{1+|\zeta|^2-2|\zeta|\cos\phi}\right)^{1/2-\rmi\lambda}\!\!\!\rmd\phi=
\int_B e^{(\frac{1}{2}-\rmi\lambda)\langle \zeta,b \rangle}\rmd b~~~(\lambda\in\R,\,\zeta\in D).
\end{eqnarray}
\end{proof}

\noindent
{\bf (C)} In Sec. \ref{se:focusing}, we have seen that it is possible to have a local
isometric immersion from an horocycle to $\R^3$. The image of this immersion
is a Beltrami pseudosphere. Now, we describe in more details the equation
of the pseudosphere and its coordinates.

First, we want to recover the pseudosphere as a surface of revolution of
a curve in $\R^3$. Let us recall that, in general, the position vector $\br$
of the surface of revolution generated by the rotation of a plane curve
$z=\Phi(r)$ about the $z$ axis is given by:
\beq
\label{rotazione}
\br=
\left(
\begin{array}{ccc}
r\cos{v} \\
r\sin{v} \\
\Phi(r)
\end{array}
\right),
\eeq
where $v$ varies between $0$ and $2\pi$. Here the circles $r=\Cs$
are the parallels and the curves $v=\Cs$ are the meridians.
The first fundamental form associated with the surface (\ref{rotazione})
is given by
\beq
\label{formarotazione}
I=\left\{1+[\Phi'(r)]^2\right\}\rmd r^2+r^2 \rmd v^2.
\eeq
We now rewrite the form (\ref{formarotazione}) as follows:
\beq
\label{formarotazione2}
I=\rmd u^2+r^2 \rmd v^2,
\eeq
where
\beq
\label{sostituzione}
\rmd u=\sqrt{1+[\Phi'(r)]^2}\rmd r; ~~~~~ r=r(u).
\eeq
From the general Gauss' theory of surfaces, we have that the total
curvature is given by \cite{Rogers}
\beq
\label{curvaturabacklund}
K=-\frac{1}{r}\frac{\rmd^2r}{\rmd u^2},
\eeq
whence the general pseudospherical surface of revolution with
$K=-1/\rho^2$ adopts the form \cite{Rogers}
\beq
\label{posizionepseudo}
r(u)=c_1\cosh{\frac{u}{\rho}}+c_2\sinh{\frac{u}{\rho}}.
\eeq
In the case $c_1=c_2=c$, which corresponds to a parabolic pseudospherical surface of revolution,
the meridians are given by
\beq
\label{meridians}
r(u)=ce^{u/\rho},
\eeq
while
\beq
\label{sostituzione2}
z=\Phi(r)=\int\sqrt{1-\left(\frac{c}{\rho}\right)^2e^{2u/\rho}}\,\rmd u.
\eeq
Then, the first fundamental form, with $c=1$, has the following expression:
\beq
\label{formafinale}
I=\rmd u^2+e^{2u/\rho}\,\rmd v^2.
\eeq
The coordinates $u,v$ are called \emph{Beltrami coordinates}.

The substitution
\beq
\label{sostituzione3}
\sin{\varphi}=\frac{c}{\rho}e^{u/\rho}
\eeq
in (\ref{sostituzione2}) yields
\beq
\label{trattrice2}
z=\rho\left(\cos{\varphi}+\ln\left|\tan{\frac{\varphi}{2}}\right|\right).
\eeq
From formulae (\ref{rotazione}), (\ref{meridians}), (\ref{sostituzione3})
and (\ref{trattrice2}) we obtain
\beq
\label{rotazione2}
\br=\left(
\begin{array}{ccc}
\rho\sin\varphi\cos v \\
\rho\sin\varphi\sin v \\
\rho\left(\cos\varphi+\ln\left|\tan{\frac{\varphi}{2}}\right|\right)
\end{array}
\right),
\eeq
and the first fundamental form (in terms of $\varphi$ and $v$) is
\beq
\label{formarotazione3}
I=\rho^2\cot^2\varphi\,\rmd\varphi^2+\rho^2\sin^2\varphi\,\rmd v^2.
\eeq

Equation (\ref{rotazione2}) is the parametric form of the parabolic
pseudosphere, seen as surface of revolution about the $z$ axis of
the curve called \emph{tractrix}, which satisfies the following property:
the length of the tangent from the point where it touches the curve
to the point where it intersects the $z$ axis is constant and equal
to $|\rho|$; $\varphi$ is the angle that the tangent to the meridian
makes with the $z$ axis. The angle $\varphi$ varies between $0$ and
$\pi$ so that, keeping $\rho=1$, the related parabolic pseudosphere has vertices at
$z=+\infty$ (corresponding to $\varphi=\pi$) and at $z=-\infty$
(corresponding to $\varphi=0$) and rim at $z=0$ (corresponding to
$\varphi=\frac{\pi}{2}$). The curve is continuous
and regular except at the point $z=0$, which is a cusp point. Choosing
$\rho=-1$ and varying $\varphi$ from $0$ to $\pi$, we shall have the
upward vertical $z$ axis positively oriented ($\varphi$ varying from
$0$ to $\frac{\pi}{2}$), and the downward vertical $z$ axis negatively
oriented ($\varphi$ varying from $\frac{\pi}{2}$ to $\pi$).
In accordance with Hilbert's theorem, for which
it is impossible to embed the entire hyperbolic disk onto $\R^3$,
and since we want the immersion from the horocycle to the
pseudosphere to be regular, then the image must be contained either in the downward
component or in the upward component of the pseudosphere; thus it
does not contain the cuspidal rim. Accordingly, taking $\rho=-1$ once and for all,
the first form, written in Beltrami coordinates, reads
\beq
\label{beltramiforma}
I=\rmd u^2+\left(e^{-u}\rmd v\right)^2 ~~~~~~~ (u\geqslant 0),
\eeq
and the tractrix is:
\beq
\label{trattrice3}
\begin{array}{ll}
x=-\sin{\varphi}, \\
z=-\left(\cos{\varphi}+\ln\left|\tan{\frac{\varphi}{2}}\right|\right).
\end{array}
\eeq
Then, in Sec. \ref{subse:local} we use the form (\ref{beltramiforma})
with $u\geqslant 0$, varying $\varphi$ from $\frac{\pi}{2}$ to $0$ and, accordingly,
$z\geqslant 0$ (see Fig. \ref{fig_2}). In Sec. \ref{subse:sine}, where the coordinate $u$ does not enter
the game and we have chosen the coordinates $p$ and $q$, it is convenient
to vary $\varphi$ from $\frac{\pi}{2}$ to $\pi$ and,
in accordance, taking the downward vertical axis negatively oriented, i.e., $z\leqslant 0$.

\newpage

\end{document}